\documentclass[11pt,a4paper]{article}

\usepackage{fullpage}
\newcommand{\ignore}[1]{}

\ignore{
\usepackage[
normalsections, 
normalmargins, 
normallists, 
normalfloats, 
normalindent, 
normaltitle, 
normalleading, 
normallooseness, 
]{savetrees}
\usepackage{xspace}   
}

\usepackage{amsthm}

\newtheorem {theorem}{Theorem}

\newtheorem {lemma}[theorem]{Lemma}
\newtheorem {corollary}[theorem]{Corollary}

\date{}

\usepackage{enumerate}
\usepackage{eucal}
\usepackage{amsmath,amstext,amssymb}


\newcommand{\defparproblemu}[4]{
  \vspace{1mm}
\noindent\fbox{
  \begin{minipage}{0.97\textwidth}
  \begin{tabular*}{0.97\textwidth}{@{\extracolsep{\fill}}lr} #1 & {\bf{Parameter:}} #3 \\ \end{tabular*}
  {\bf{Input:}} #2  \\
  {\bf{Question:}} #4
  \end{minipage}
  }
  \vspace{1mm}
}

\newcommand{\pcvcname}{{\sc{Planar Connected Vertex Cover}}}
\newcommand{\pcondom}{{\sc{Planar Connected Dominating Set}}}

\newcommand{\indleafname}{{\sc{Maximum Independent Leaf Spanning Tree}}}
\newcommand{\maxleafname}{{\sc{Max Leaf}}}
\newcommand{\pmaxleafname}{{\sc{Planar Max Leaf}}}
\newcommand{\nsisname}{{\sc{Maximum Nonseparating Independent Set}}}
\newcommand{\pnsisname}{{\sc{Planar Maximum Nonseparating Independent Set}}}
\newcommand{\ch}{{\rm ch}}
\newcommand{\C}{\mathcal{C}}
\newcommand{\case}[1]{\noindent{\bf CASE #1}}
\renewcommand{\P}{\mathbf{P}}
\newcommand{\NP}{\mathbf{NP}}
\newcommand{\myparagraph}[1]{\vskip 2mm \noindent{\bf #1\ }}

\begin{document}

\title{A $9k$ kernel for nonseparating independent set in planar graphs\thanks{Work supported by the National Science Centre (grant N N206 567140). An extended abstract was presented at WG 2012.}}

\author{
  \L{}ukasz Kowalik, Marcin Mucha\\
  Institute of Informatics, University of Warsaw, Poland\\
  \texttt{\{kowalik,mucha\}@mimuw.edu.pl}
}

\maketitle

\begin{abstract}
  We study kernelization (a kind of efficient preprocessing) for NP-hard problems on planar graphs.
  Our main result is a kernel of size at most $9k$ vertices for the \pnsisname\  problem. 
  A direct consequence of this result is that \pcvcname\ has no kernel with at most $(9/8-\epsilon)k$ vertices, for any $\epsilon>0$, assuming $\P\ne\NP$.
  We also show a very simple $5k$-vertices kernel for \pmaxleafname, which results in a lower bound of $(5/4-\epsilon)k$ vertices for the kernel of \pcondom\ (also under $\P\ne \NP$).
  
  As a by-product we show a few extremal graph theory results which might be of independent interest.
  We prove that graphs that contain no separator consisting of only degree two vertices contain
  (a) a spanning tree with at least $n/4$ leaves and (b) a nonseparating independent set of size at least $n/9$ (also, equivalently, a connected vertex cover of size at most $\frac{8}{9}n$). The result (a) is a generalization of a theorem of Kleitman and West~\cite{kw:many-leaves} who showed the same bound for graphs of minimum degree three. Finally we show that every $n$-vertex outerplanar graph contains an independent set $I$ and a collection of vertex-disjoint cycles $\C$ such that $9|I|\ge 4n-3|\C|$. 
\end{abstract}

\section{Introduction}

Many NP-complete problems, while most likely not solvable efficiently, admit kernelization algorithms, i.e.\ efficient algorithms which replace input instances with an equivalent, but often much smaller one. More precisely, a {\em kernelization algorithm} takes an instance $I$ of size $n$ and a parameter $k\in\mathbb{N}$, and after time polynomial in $n$ it outputs an instance $I'$ (called a {\em kernel}) with a parameter $k'$ such that $I$ is a yes-instance iff $I'$ is a yes instance, $k'\le k$, and $|I'|\le f(k)$ for some function $f$ depending only on $k$. The most desired case is when the function $f$ is polynomial, or even linear (then we say that the problem admits a polynomial or linear kernel). In such a case, when the parameter $k$ is relatively small, the input instance, possibly very large, is ``reduced'' to a small one. In this paper by the size of the instance $|I|$ we always mean the number of vertices.

In the area of kernelization of graph problems the class of planar graphs (and more generally $H$-minor-free graphs) is given special attention. This is not only because planar graphs are models of many real-life networks but also because many problems do not admit a (polynomial) kernel for general graphs, while restricted to planar graphs they have a polynomial (usually even linear) kernel. A classic example is the $335k$-vertex kernel for the {\sc Planar Dominating Set} due to Alber et al.~\cite{afn:planar-domset}. In search for optimal results, and motivated by practical applications, recently researchers try to optimize the constants in the linear function bounding the kernel size, e.g.\ the current best bound for the size of the kernel for the {\sc Planar Dominating Set} is $67k$~\cite{cfkx:duality-and-vertex}. Such improvements often require nontrivial auxiliary combinatorial results which might be of independent interest. Our paper fits into this framework. 

Let $G=(V,E)$ be a graph and let $S$ be a subset of $V$. By $G[S]$ we denote the subgraph of graph $G$ induced by a set of vertices $S$. 
We say that $S$ is {\em nonseparating} if $G[V-S]$ is connected, otherwise we say that $S$ is a {\em separator}.
We focus on kernelization of the following problem:

\defparproblemu{\nsisname\ (NSIS)}{a graph $G=(V,E)$ and an integer $k\in \mathbb{N}$}{$k$}{Is there a nonseparating independent set of size at least $k$?}

In what follows, $|V|$ is denoted by $n$.
This problem is closely related with {\sc Connected Vertex Cover} (CVC in short), where given a graph $G=(V,E)$ and an integer $k$ we ask whether there is a set $S\subseteq V$ of size at most $k$ such that $S$ is a vertex cover (i.e.\ every edge of $G$ has an endpoint in $S$) and $S$ induces a connected subgraph of $G$.
The CVC problem has been intensively studied, in particular there is a series of results on kernels for planar graphs~\cite{guon:planarkernels,mfcs} culminating in the recent $\tfrac{11}{3}k$ kernel~\cite{moja-cvc-11/3}.
It is easy to see that $C$ is a connected vertex cover iff $V-C$ is a nonseparating independent set. In other words, $(G,k)$ is a yes-instance of CVC iff $(G,n-k)$ is a yes-instance of NSIS. In such a case we say that NSIS is a {\em parametric dual} of CVC. An important property of a parametric dual, discovered by Chen et al~\cite{cfkx:duality-and-vertex}, is that if the dual problem admits a kernel of size at most $\alpha k$, then the original problem has no kernel of size at most $(\alpha/(\alpha-1)-\epsilon)k$, for any $\epsilon>0$, unless P$=$NP.

As we will see, the NSIS problem in planar graphs is strongly related to the \maxleafname\ problem: given a graph $G$ and an integer $k$, find a spanning tree with at least $k$ leaves. 

\myparagraph{Our Kernelization Results}
We study \pnsisname\ ({\sc Planar NSIS} in short), which is the NSIS problem restricted to planar graphs. We show a kernel of size at most $9k$ for {\sc Planar NSIS}. This implies that \pcvcname\ has no kernel of size at most $(9/8-\epsilon)k$, for any $\epsilon>0$, unless P$=$NP. This is the first non-trivial lower bound for the kernel size of the {\sc Planar CVC} problem. Our kernelization algorithm is very efficient: it can be implemented to run in $O(n)$ time. As a by-product of our considerations we also show a $5k$ kernel for both \maxleafname\ and \pmaxleafname, which in turn implies a lower bound of $(5/4-\epsilon)k$ for its parametric dual, i.e.\ \pcondom.

\myparagraph{Our Combinatorial Results}
Some of our auxiliary combinatorial results might be of independent interest. 
We prove that graphs that contain no separator consisting of only degree two vertices contain
(a) a spanning tree with at least $n/4$ leaves and (b) a nonseparating independent set of size at least $n/9$ (also, equivalently, a connected vertex cover of size at most $\frac{8}{9}n$). The result (a) is a generalization of a theorem of Kleitman and West~\cite{kw:many-leaves} who showed the same bound for connected graphs of minimum degree three. As an another variation of this theorem, we show that every connected graph in which every edge has an endpoint of degree at least 3 has a spanning tree with at least $n/5$ leaves.  
Finally we show that every $n$-vertex outerplanar graph contains an independent set $I$ and a collection of vertex-disjoint cycles $\C$ such that $9|I|\ge 4n-3|\C|$. 

\myparagraph{Previous Results}
There are a few interesting results on NSIS in cubic/subcubic graphs. Speckenmeyer~\cite{Speckenmeyer} and Huang and Liu~\cite{huang} focus on cubic graphs and show some interesting relations between the size of maximum nonseparating independent set and the values of maximum genus and the size of the minimum feedback vertex set. Ueno et al.~\cite{ueno} showed that the problem is polynomial-time solvable for subcubic graphs by a reduction to the matroid parity problem.

As the {\sc CVC} is NP-complete even in planar graphs~\cite{garey:planarcvc}, so is {\sc NSIS}.
To the best of our knowledge there is no prior work on the parameterized complexity of \nsisname. The reason for that is simple: a trivial reduction from {\sc Independent Set} (add a vertex connected to all the vertices of the original graph) shows that NSIS is W[1]-hard, i.e.\ existence of an algorithm of complexity $O(f(k)\cdot |V|^{O(1)})$ is very unlikely (and so is the existence of a polynomial kernel). However, by general results on kernelization for sparse graphs~\cite{fomin:bidim-kernels}, one can see that NSIS admits a $O(k)$ kernel for apex-minor-free graphs, so in particular for planar graphs. However, the general approach does not provide a good bound on the constant hidden in the asymptotic notation. Observe that this constant is crucial: since we deal with an NP-complete problem, in order to find an exact solution in the reduced instance, most likely we need exponential time (or at least superpolynomial, because for planar graphs $2^{O(\sqrt{k})}$-time algorithms are often possible), and the constant appears in the exponent. 

\maxleafname\ has been intensively studied. Although there is a $3.75k$ kernel even for general graphs due to Estivill-Castro et al.~\cite{fellows:maxleaf}, some of their reductions do not preserve planarity. Moreover, the algorithm and its analysis are extremely complicated, while our method is rather straightforward.

\myparagraph{Yet another equivalent formulation of NSIS}
Consider the NSIS problem again.
It is easy to see that if graph $G$ has two nontrivial (i.e.\ with at least two vertices) connected components, then the answer is NO. Furthermore, an instance $(G,k)$ consisting of a connected component $C$ and an independent set $I$ is equivalent to the instance $(G[C],k-|I|)$. Hence, w.l.o.g.\ we may assume that the input graph $G$ is connected. It is easy to see that the \nsisname\ problem for connected graphs is equivalent to the following problem, which we name \indleafname:

\defparproblemu{\indleafname}{a graph $G=(V,E)$ and an integer $k\in \mathbb{N}$}{$k$}{Is there a spanning tree $T$ such that the set of leaves of $T$ contains a subset of size $k$ that is independent in $G$?}

In what follows, we will use the above formulation, since it directly corresponds to our approach.

\myparagraph{Nonstandard terminology and notation}
By $N_G(v)$ we denote the set of neighbors of vertex $v$. 
By a $d$-vertex we mean a vertex of degree $d$.

\section{A simple $\boldmath{12k}$ kernel for Planar NSIS}
\label{sec:12k}

In this section we describe a relatively simple algorithm that finds a $12k$ kernel for {\sc Planar NSIS}. This is achieved by the following three steps. First, in Section~\ref{sec:rule} we show a reduction rule and a linear-time algorithm which, given an instance $(G,k)$, returns an equivalent instance $(G',k')$ such that $|V(G')|\le |V(G)|$, $k'\le k$, and moreover $G'$ has no separator consisting of only 2-vertices. Second, in Section~\ref{sec:many-leaves} we show that $G'$ has a spanning tree $T$ with at least $|V(G')|/4$ leaves (and it can be found in linear time). Denote the set of leaves of $T$ by $L$. Third, in Section~\ref{sec:outer} we show that the graph $G'[L]$ is outerplanar. It follows that $G'[L]$ has an independent set of size at least $|L|/3$ (which can be easily found in linear time) and, consequently, $T$ has at least $|V(G')|/12$ leaves that form an independent set. Hence, if $k'\le |V(G')|/12$ our algorithm returns the answer YES (and the relevant feasible solution if needed). Otherwise $|V(G')| < 12 k' \le 12k$ so $(G',k')$ is indeed the desired kernel.

\subsection{The separator rule}
\label{sec:rule}

Now we describe our main reduction rule, which we call {\em separator rule}.
It is easier for us to prove the correctness of the rule for the CVC problem and then convert it to a rule for the NSIS problem.

\begin{quote}
 {\bf Separator rule} Assume there is a separator $S$ consisting of only $2$-vertices.
 As long as $S$ contains two adjacent vertices, remove one of them from $S$ (note that $S$ is still a separator).
 Next, choose any $v\in S$ such that the two neighbors $a, b$ of $v$ belong to distinct connected components of $G[V-S]$. If $\deg(a)=\deg(b)=1$, remove $a$ from $G$. If $\deg(a)=1$ and $\deg(b)\ge 2$, remove $a$ from $G$ and decrease the parameter $k$ by 1. Proceed analogously when $\deg(b)=1$ and $\deg(a)\ge 2$. Finally, when $\deg(a),\deg(b)\ge 2$, contract the path $avb$ into a single vertex $v'$ and decrease $k$ by 2.
\end{quote}

We say that a reduction rule for a parameterized problem $P$ is {\em correct} when for every instance $(G,k)$ of $P$ it returns an instance $(G',k')$ such that:
\begin{enumerate}[a)]
\item $(G',k')$ is an instance of $P$,
\item $(G,k)$ is a yes-instance of $P$ iff $(G',k')$ is a yes-instance of $P$,
\item $k'\le k$.
\end{enumerate}

\begin{lemma}
\label{lem:reduce}
The separator rule is correct for {\sc Planar CVC}.
\end{lemma}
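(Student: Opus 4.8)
The three requirements (a)--(c) I would verify separately, and (a),(c) are immediate: the parameter is never increased, so $k'\le k$; and $G'$ is obtained from $G$ by deleting vertices and by contracting the two edges at the degree-$2$ vertex $v$, both minor operations, so $G'$ stays planar. The whole content is condition (b), the equivalence of the two instances, which I would establish by comparing the minimum size of a connected vertex cover of $G$, written $\mathrm{cvc}(G)$, with that of $G'$, and showing that in each degree case this minimum drops by exactly the amount subtracted from $k$ (namely $0$, $1$, or $2$). Since the question asks for a cover of size at most $k$, the identity $\mathrm{cvc}(G')=\mathrm{cvc}(G)-c$ (with $c$ the decrease) gives the required iff at once.

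The two degree-$1$ cases are easy. If $\deg(a)=1$, then $a$ is a leaf whose only neighbour is $v$, and I would first note that $v$ lies in every connected vertex cover: covering $av$ needs $a$ or $v$, and if $v\notin C$ then $a\in C$, but then $a$'s only neighbour is outside $C$, so $C$ can be connected only if $C=\{a\}$, which leaves $vb$ uncovered. No minimum cover needs the leaf $a$, so we may delete it. When $\deg(b)\ge 2$, connectivity further forces $b$ into the cover (the only neighbours of $v$ are $a$, which we may assume out, and $b$), so after deleting $a$ the vertex $v$ becomes a leaf of the cover and can be exchanged for $b$; the minimum drops by exactly $1$, matching the rule. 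The degenerate case $\deg(a)=\deg(b)=1$ forces the component of $v$ to be the path $avb$, so $G=P_3$, and the claim is checked directly with $k$ unchanged.

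The main case is $\deg(a),\deg(b)\ge 2$, where $a,v,b$ are contracted to a single vertex $v'$ and $k$ drops by $2$; here I would prove $\mathrm{cvc}(G')=\mathrm{cvc}(G)-2$. For $\mathrm{cvc}(G)\le \mathrm{cvc}(G')+2$ I would extend an arbitrary connected vertex cover $C'$ of $G'$ to one of $G$: if $v'\in C'$, replace $v'$ by the path $\{a,v,b\}$; if $v'\notin C'$, then all neighbours of $v'$ lie in $C'$, and since $\deg(a),\deg(b)\ge2$ each of $a,b$ has a neighbour inside its component of $G-S$ that is already in $C'$, so $C'\cup\{a,b\}$ works. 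Both extensions add exactly two vertices, and coverage and connectivity are checked directly, using that expanding a connected piece to a connected subgraph preserves connectivity. For the reverse inequality $\mathrm{cvc}(G')\le \mathrm{cvc}(G)-2$ I would prove the structural claim that $G$ has a minimum connected vertex cover containing all three of $a,v,b$; contracting it then yields a connected vertex cover of $G'$ that is smaller by exactly $2$.

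This structural claim is the main obstacle, and it is exactly where the hypotheses on $S$ are used. Starting from a minimum connected vertex cover $C$, a short argument gives $|C\cap\{a,v,b\}|\ge 2$ (either $v\in C$, whose connectivity forces one of $a,b$ into $C$, or $v\notin C$, which forces both $a,b\in C$). If some vertex of $\{a,v,b\}$ is still missing, I would exploit that $a$ and $b$ lie in distinct components $A$ and $B$ of $G-S$: every $A$--$B$ connection inside $C$ must pass through a $2$-vertex of the separator, and such a vertex can be traded for the path $a$--$v$--$b$, which itself joins $A$ to $B$ once all of $a,v,b$ are present. The delicate point, and where I expect the real work to lie, is to verify that this rerouting simultaneously keeps the set a vertex cover (the two edges at the freed separator vertex must remain covered by their endpoints) and keeps it connected (that separator vertex may be a cut vertex of the cover, and one must argue that the path $a$--$v$--$b$ restores any connection it was carrying); the configurations in which $A$ or $B$ is a single vertex with several separator neighbours need particular care. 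Once the structural claim is settled, the three cases together give correctness of the separator rule for \pcvcname.
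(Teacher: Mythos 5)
Your setup and the easy parts are sound, and your direction $\mathrm{cvc}(G)\le \mathrm{cvc}(G')+2$ is correct --- indeed a little more direct than the paper's argument, which first swaps a separator vertex of $C'$ for $v'$ and only then expands $v'$ into $\{a,v,b\}$, whereas you simply add $a$ and $b$. One parenthetical claim there is false, though harmless: when $v'\notin C'$, the neighbour of $a$ other than $v$ that you place in $C'$ need not lie inside the component $A$ of $G-S$ (it may lie in $S$); but all your argument needs is that it is a neighbour of $v'$ in $G'$ and hence in $C'$, which is exactly the fact you state first.

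The genuine gap is the direction $\mathrm{cvc}(G')\le \mathrm{cvc}(G)-2$. You correctly reduce it to the structural claim that some minimum connected vertex cover contains all of $a,v,b$, but then you explicitly defer the two verifications (that the freed separator vertex's edges stay covered, and that the set stays connected) which are the actual content of the lemma --- this is precisely where the hypothesis that $S$ consists of $2$-vertices is used, and flagging where you ``expect the real work to lie'' is not a proof of it. The missing idea, which settles both worries at once, is to work with a cycle rather than an $A$--$B$ connection: since $C$ is connected and contains at least two of $\{a,v,b\}$, the graph $G[C\cup\{a,v,b\}]$ contains a cycle through $v$ (if $v\notin C$, close an $a$--$b$ path of $G[C]$ through $a,v,b$; if, say, $b\notin C$, then some neighbour $b_0\ne v$ of $b$ lies in $C$, and a $v$--$b_0$ path of $G[C]$, which must start with the edge $va$, closes up through $b$). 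Any cycle through the $2$-vertex $v$ uses both edges $va$ and $vb$, so the rest of the cycle is an $a$--$b$ path avoiding $v$; as $a\in A$ and $b\in B$, this path crosses $S$ at some $w_1\in S\cap C$ with $w_1\ne v$. Now $\deg_G(w_1)=2$ forces \emph{both} edges of $w_1$ to lie on the cycle. Hence both neighbours of $w_1$ belong to $C\cup\{a,v,b\}$, so $(C\setminus\{w_1\})\cup\{a,v,b\}$ is still a vertex cover; and $w_1$ carries no connection other than along its two cycle edges, for which the remainder of the cycle is a detour, so the new set is still connected. Its size is at most $|C|$, which proves the structural claim; your worries about $w_1$ being a cut vertex of the cover, or about $A$ or $B$ being a single vertex with several separator neighbours, disappear in this formulation. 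Without this (or an equivalent) argument, the proposal remains a plan rather than a proof.
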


\begin{proof}
Since the separator rule modifies the graph by removing a vertex or contracting a path it is planarity preserving, so a) holds. The condition c) is easy to check so we focus on b), i.e.\ the equivalence of the instances.

The case when $\deg(a)=\deg(b)=1$ are trivial so we skip the argument.

Now assume $\deg(a)=1$, $\deg(b)\ge 2$ (the case $\deg(b)=1$, $\deg(a)\ge 2$ is symmetric). If $C$ is a minimum connected vertex cover of $G$, $|C|\le k$, then $v\in C$ and $a\not\in C$. Since $G[C]$ is connected, $\deg(v)=2$ and $\deg(b)\ge 2$, also $b\in C$. It follows that $C\setminus\{v\}$ is a connected vertex cover of $G'$ of size at most $k'=k-1$. In the other direction, if $C'$ is a connected vertex cover of $G'$, $|C'|\le k' = k-1$, then $b\in C'$ and clearly $C'\cup\{v\}$ is a connected vertex cover of $G$.
 
Finally, assume $\deg(a),\deg(b)\ge 2$. 
Let $A$ and $B$ be the connected components of $G[V-S]$ that contain $a$ and $b$, respectively.
Let $a_0$ (resp. $b_0$) be any neighbor of $a$ (resp. $b$) distinct from $v$ ($a_0$ and $b_0$ exist since $\deg(a),\deg(b)\ge 2$).
Note that $a_0 \in A \cup S$, $b_0\in B \cup S$ and $a_0,b_0 \in G'$. 

Let us first assume that $G'$ has a connected vertex cover $C'$, $|C'| \le k'$. 
We show that $G'$ has a connected vertex cover $D'$, $|D'| \le k'$ such that $v'\in D'$. Then it is easy to check that $C=(D'\setminus \{v'\}) \cup \{a,v,b\}$ is the required connected vertex cover of $G$, and $|C|=|D'|+2 \le k$.

If $v' \in C'$ we just put $D'=C'$ so suppose that $v' \not\in C'$. Then $a_0, b_0 \in C'$. Since $G'[C']$ is connected, there is a path $P$ from $a_0$ to $b_0$ in $G'[C']$, possibly of length 0. Since $a_0 \in A \cup S$ and $b_0\in B \cup S$ we infer that $P$ contains a vertex $w\in S\cap C'$.
It follows that $D' = C' \setminus \{w\} \cup \{v'\}$ is a connected vertex cover of $G'$ of size at most $k'$.

Let us now assume that $(G,k)$ has a connected vertex cover $C$, $|C| \le k$. If $\{a,v,b\} \subseteq C$, then clearly $C' = (C\setminus\{a,v,b\}) \cup \{v'\}$ is a connected vertex cover of $G'$ with $|C'| \le k-2 = k'$. On the other hand, if $\{a,v,b\} \not\subseteq C$, then $G[C \cup \{a,v,b\}]$ contains a cycle (because $C$ is connected). Since $S$ is a separator, this cycle has to contain some $w_1 \in S$ other than $v$. In this case we claim that $C' = (C \setminus \{w_1\}) \cup \{a,v,b\}$ is a connected vertex cover of $G'$ and $|C'| \le k'$. It is clear that $C'$ is a connected vertex cover. The size bound follows from the fact, that $C$ has to contain two out of the three vertices $\{a,v,b\}$. 
\ignore{
Let us now assume that $(G,k)$ has a connected vertex cover $C$, $|C| \le k$. 
We show that $G$ has a connected vertex cover $D$, $|D| \le k$ such that $\{a,v,b\}\subseteq D$. Then clearly $D\setminus\{a,v,b\} \cup \{v'\}$ is a connected vertex cover of $G'$ with $|C'| \le k-2 = k'$, as required. 

Note that $|\{a,v,b\}\cap C| \ge 2$. If $|\{a,v,b\}\cap C| =3$ we are done, so $|\{a,v,b\}\cap C| =2$.

Suppose $\{a,v,b\}\cap C = \{v,b\}$ (the case $\{a,v,b\}\cap C=\{v,a\}$ is symmetric).
Then $a_0 \in C$. Since $G[C]$ is connected, there is a path $P$ from $a_0$ to $b$ in $G[C]$. Since $a_0 \in A \cup S$ and $b\in B$ we infer that $P$ contains a vertex $w\in S\cap C$, $w\ne v$. Hence we can put $D=C\setminus\{w\}\cup\{a\}$.

Suppose $\{a,v,b\}\cap C = \{a,b\}$.
Since $G[C]$ is connected, there is a path $P$ from $a$ to $b$ in $G[C]$. Since $a \in A$ and $b\in B$ we infer that $P$ contains a vertex $w\in S\cap C$, $w\ne v$. Hence we can put $D=C\setminus\{w\}\cup\{v\}$.
} 
\end{proof}

Now, we convert the separator rule to the {\bf dual separator rule} as follows. Let $(G,\ell)$ be an instance of {\sc (Planar) NSIS}. Put $k=|V(G)|-\ell$, apply the separator rule to $(G,k)$ and get $(G',k')$. Put $\ell'=|V(G')|-k'$ and return $(G',\ell')$.

\begin{corollary}
\label{cor:reduce}
The dual separator rule is correct for {\sc Planar NSIS}.
\end{corollary}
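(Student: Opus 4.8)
The plan is to derive everything from the correctness of the primal separator rule for \pcvcname\ (Lemma~\ref{lem:reduce}) together with the parametric duality between \cvcname\ and \nsisname\ recalled in the introduction. Recall that for \emph{every} graph $H$ the complementation $C\mapsto V(H)\setminus C$ is a bijection between connected vertex covers of $H$ and nonseparating independent sets of $H$: a set $C$ covers all edges iff its complement is independent, and the connectivity requirement on $G[C]$ is literally the same condition on both sides. Hence, writing $m=|V(H)|$, the instance $(H,j)$ is a yes-instance of \cvcname\ iff $(H,m-j)$ is a yes-instance of \nsisname. I would use this identity twice, once on $G$ and once on $G'$, to transport Lemma~\ref{lem:reduce} across the duality.

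The heart of the argument is condition~(b), the equivalence of yes-instances. Unwinding the definition of the dual rule---$k=|V(G)|-\ell$, the pair $(G',k')$ is the output of the primal rule on $(G,k)$, and $\ell'=|V(G')|-k'$---I would chain: $(G,\ell)$ is a yes-instance of \nsisname\ iff $(G,k)$ is a yes-instance of \cvcname\ (duality on $G$, since $k=|V(G)|-\ell$); iff $(G',k')$ is a yes-instance of \cvcname\ (Lemma~\ref{lem:reduce}); iff $(G',\ell')$ is a yes-instance of \nsisname\ (duality on $G'$, since $\ell'=|V(G')|-k'$). Composing the three biconditionals yields precisely the required equivalence between $(G,\ell)$ and $(G',\ell')$.

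Conditions~(a) and~(c) are routine bookkeeping. Planarity is preserved because the dual rule performs exactly the same deletions and path contraction on the graph as the primal rule, which are planarity-preserving by Lemma~\ref{lem:reduce}; so $G'$ is planar. For the parameter bound I would write $\ell-\ell'=(|V(G)|-|V(G')|)-(k-k')$ and inspect the three branches of the rule: in the $\deg(a)=\deg(b)=1$ branch one vertex is deleted and $k$ is untouched; in the mixed branch one vertex is deleted and $k$ drops by one; in the contraction branch the three vertices $a,v,b$ collapse to one (two fewer vertices) while $k$ drops by two. In every branch the number of deleted vertices is at least the drop in $k$, so $\ell'\le\ell$, giving~(c); the same inspection shows $\ell'\in\{\ell,\ell-1\}$, so $\ell'\ge0$ (hence $(G',\ell')$ is a legitimate \pnsisname\ instance) as soon as $\ell\ge1$, which we may assume since $(G,\ell)$ with $\ell\le0$ is trivially a yes-instance.

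I do not expect a genuine obstacle here: all the combinatorial substance already sits in Lemma~\ref{lem:reduce}, and the corollary is essentially a translation through an exact duality. The only place demanding care is the simultaneous bookkeeping of the two distinct vertex counts $|V(G)|$ and $|V(G')|$---one must ensure that the parameter substitutions $k=|V(G)|-\ell$ and $\ell'=|V(G')|-k'$ use the correct vertex set each time, so that the chain in~(b) closes up and the arithmetic in~(c) comes out as $\ell-\ell'\ge0$ even in the one branch ($\deg(a)=\deg(b)=1$) where $k$ does not change. Since the complementation identity holds for arbitrary (not necessarily connected) graphs, no additional connectivity hypothesis on $G'$ is needed to apply the duality to the reduced instance.
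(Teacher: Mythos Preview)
Your proposal is correct and follows essentially the same approach as the paper: the paper's proof is a two-sentence appeal to Lemma~\ref{lem:reduce} for conditions~(a) and~(b) and declares~(c) ``easy to check,'' while you spell out explicitly the duality chain for~(b) and the three-branch vertex/parameter bookkeeping for~(c). There is no substantive difference in strategy, only in level of detail.
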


\begin{proof}
 The condition c) is easy to check while Lemma~\ref{lem:reduce} implies a) and b). 
\end{proof}

It is clear that the dual separator rule can be implemented in linear time. 
However, we would like to stress a stronger claim: there is a linear-time algorithm that given a graph $G$ applies the separator rule as long as it is applicable. This algorithm can be sketched as follows. 
First we remove all 1-vertices that are adjacent to 2-vertices (note that this is a special case of the dual separator rule).
Second, we find all maximal paths that contain 2-vertices only. For every such path, if it contains at least 3 vertices (and hence two of them form a separator), we replace it by a path of two vertices (this is again a special case of the dual separator rule). It is easy to implement these two steps in linear time.
Now, every 2-vertex has at most one neighboring 2-vertex. 
We remove all 2-vertices that do not have neighbors of degree 2 and for each pair of adjacent 2-vertices we remove exactly one of them. Next, we pick a connected component $A$ of the resulting graph and we mark all its vertices. Then we consider all degree 2 neighbors of this component (that has been removed). If such a neighbor $v$ has an unmarked neighbor then it connects $A$ with another component $B$. We apply the separator rule to the vertex $v$ (in constant time) and we mark $v$ as {\em processed}. Then we mark all the vertices of $B$. As a result the components $A$ and $B$ are joined into a new component $A$. In any case, the vertex $v$ is not considered any more. We continue the procedure until the graph gets connected. All the removed 2-vertices that are not marked as processed are put back in the graph. Since every vertex of $G$ is marked at most once, the whole algorithm works in linear time.

\subsection{Finding a spanning tree with many leaves}
\label{sec:many-leaves}

Kleitman and West~\cite{kw:many-leaves} showed how to find a spanning tree with at least $n/4$ leaves in a graph of minimum degree 3. In this section we generalize their result by proving the following theorem.

\begin{theorem}
\label{thm:n/4}
Let $G$ be a connected $n$-vertex graph that does not contain a separator consisting of only 2-vertices.
Then $G$ has a spanning tree with at least $n/4$ leaves. Moreover, such a tree can be found in linear time.
\end{theorem}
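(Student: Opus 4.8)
The plan is to generalise the expansion/exchange argument of Kleitman and West~\cite{kw:many-leaves}, isolating the single place where their minimum-degree-three hypothesis is used (every vertex carries a ``spare'' non-tree edge for local exchanges) and replacing it by the non-separation hypothesis. First I would record the structural consequences of the hypothesis for the set $W$ of $2$-vertices, writing $U=V\setminus W$. Taking $S=W$ shows that $G[U]$ is connected (apart from the degenerate all-degree-two case, which is just the triangle). The hypothesis forbids three consecutive $2$-vertices $x,y,z$, since then $\{x,z\}$ is a separator isolating $y$, and it forbids a $1$-vertex adjacent to a $2$-vertex, since that single $2$-vertex would separate the pendant. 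Hence $2$-vertices occur only as isolated vertices or as adjacent pairs, each having at least one neighbour in $U$. In any spanning tree a $2$-vertex has tree-degree $1$ or $2$, so it is either a leaf or lies on a ``bare path'' of tree-degree-two internal vertices; this dichotomy is what will let me treat $2$-vertices, which have no spare edge, separately from the Kleitman--West machinery.

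Next I would fix a spanning tree $T$ maximising the number of leaves $\ell$ and aim to prove that the number of internal vertices is at most $3\ell$, which gives $\ell\ge n/4$. Let $B$ be the branch vertices (tree-degree $\ge 3$) and let the remaining internal vertices be the tree-degree-two vertices on bare paths. Standard tree counting gives $|B|\le\ell-2$, so contracting all tree-degree-two vertices yields a topological tree with at most $2\ell-3$ edges, i.e.\ at most $2\ell-3$ bare paths. The task therefore reduces to controlling how many tree-degree-two internal vertices a bare path may contain in a leaf-maximum tree.

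This is exactly where the hypothesis re-enters, via two exchange moves. For a high-degree vertex on a bare path the Kleitman--West exchange applies verbatim: using its spare edge one reroutes the tree so that more of the path becomes leaves, contradicting maximality. For a $2$-vertex $v$ on a bare path there is no spare edge, but $\{v\}$ is a singleton separator consisting only of a $2$-vertex, so by hypothesis $G-v$ is connected; hence some non-tree edge crosses the cut that $v$ induces in $T$, and swapping the appropriate tree edge at $v$ for this crossing edge turns $v$ into a leaf. Exhausting both moves in a leaf-maximum tree forces the bare paths to be short, and a discharging argument charging internal vertices to nearby leaves then yields at least $n/4$ leaves; the many $2$-vertices, now all leaves, only help.

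The main obstacle is sharpening this accounting to the exact constant $1/4$ rather than a weaker bound. The naive estimate ``bare paths are short, multiply by the number of paths'' is too lossy: bounding each bare path by two tree-degree-two vertices and multiplying by $2\ell-3$ only gives about $n/6$. The real work is to show that each exchange \emph{strictly} increases the leaf count---so that a leaf-maximum tree genuinely has no long bare paths---and to set up the discharging so that the occasional longer bare path is paid for by an adjacent branch vertex, with the $2$-vertex reroutes interacting correctly with the charges. A secondary obstacle is the linear-time claim: the exchanges can be carried out greedily as an expansion process starting from a degree-$\ge 3$ vertex and its neighbours, but one must organise the swaps and the final attachment of the remaining $2$-vertices as leaves so that each vertex and edge is touched only a constant number of times, in the spirit of the implementation already sketched for the reduction rule in Section~\ref{sec:rule}.
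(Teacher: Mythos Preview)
Your route is genuinely different from the paper's, and the step you yourself flag as ``the real work'' is not a detail to be filled in later but the whole argument. The swap you propose for a $2$-vertex $v$ on a bare path --- delete a tree edge at $v$ and insert a non-tree edge $e$ crossing the resulting cut --- need not increase the number of leaves: if both endpoints of $e$ are leaves of $T$ you gain $v$ but lose two. Leaf-maximality of $T$ therefore does \emph{not} forbid $2$-vertices on bare paths, and your discharging has nothing to discharge against. The appeal to ``the Kleitman--West exchange'' for high-degree vertices on bare paths is equally unsupported: a spare non-tree edge at such a vertex allows a reroute, but not one that is guaranteed to produce a net gain of leaves; Kleitman and West do not obtain $n/4$ by analysing a leaf-maximum tree, and you give no replacement argument. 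Without these two moves, the bare-path bookkeeping and the discharging never get off the ground.

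The paper goes the opposite, constructive way. After deleting every edge joining two $2$-vertices (which preserves both connectivity and the hypothesis), it grows a tree from an arbitrary root by repeatedly \emph{expanding} a leaf $v$, i.e.\ attaching $N_G(v)\setminus V(T)$ as children, using three prioritised operations: O1 expands a leaf with at least two new neighbours; O2 expands a leaf whose unique new neighbour $x$ is already dead or has a second neighbour inside $T$; O3 expands a leaf whose unique new neighbour $x$ still has at least two neighbours outside $T$, and then immediately expands $x$. The no-$2$-vertex-separator hypothesis is invoked exactly once, to show the process cannot get stuck: if none of O1--O3 applies but $T$ is not yet spanning, then $\bigcup_{v\in L(T)} N_G(v)\setminus V(T)$ is a separator consisting only of $2$-vertices. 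A short count --- each O2 kills at least one leaf so $|X_2|\le |L(T)|$, each O3 contributes one vertex to $P_1$ and one to $P_{\ge 2}$ so $|X_3\cap P_{\ge 2}|\le |L(T)|-1$, and $\sum_{d\ge 2} d|P_d|\le 2(|L(T)|-1)$ --- plugged into the child-count identity gives $4|L(T)|>n$ directly, with no exchange and no discharging. The linear-time claim then comes for free from the construction, whereas your extremal argument would still need a separate algorithm.
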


We will slightly modify the approach of Kleitman and West so that vertices of smaller degree are allowed.

First note that it suffices to show a simplified case where $G$ has no edge $uv$ such that both $u$ and $v$ are 2-vertices (and we still assume the nonexistence of a separator consisting of only 2-vertices).
Indeed, if the theorem holds for the simplified case, we just remove from $G$ all the edges $uv$ such that both $u$ and $v$ are 2-vertices, call the new graph $\tilde{G}$. Note that $\tilde{G}$ is connected and $\tilde{G}$ does not contain a separator consisting of only 2-vertices (otherwise the old $G$ contains such a separator). Hence we get a spanning tree $T$ of $\tilde{G}$ with at least $|V(\tilde{G})|/4$ leaves by applying the simplified case. However, since $|V(\tilde{G})|=|V(G)|$ and $T$ is also a spanning tree of $G$, so $T$ is also the required tree for the general claim. Hence in what follows we assume that $G$ has no edge with both endpoints of degree 2.

In order to build a spanning tree $T$ our algorithm begins with a tree consisting of an arbitrarily chosen vertex (called a {\em root}), and then the spanning tree is built by a sequence of {\em expansions}. To expand a leaf $v\in T$ means to add the vertices of $N_G(v)\setminus V(T)$ to $T$ and connect them to $v$ in $T$. Note that in a tree $T$ built from a root by a sequence of expansions, if a vertex in $V(G)-V(T)$ is adjacent with $v\in V(T)$, then $v$ is a leaf.

The order in which the leaves are expanded is important. To describe this order, we introduce three operations (operations O1 and O3 are the same as in~\cite{kw:many-leaves}, but O2 is modified):

\begin{enumerate}
 \item [(O1)] Applies when there is a leaf $v\in V(T)$ such that $|N_G(v)\setminus V(T)|\ge 2$. Then $v$ is expanded.
 \item [(O2)] Applies when there is a leaf $v\in V(T)$ such that $|N_G(v)\setminus V(T)|= 1$ (let $N_G(v)\setminus V(T) = \{x\}$), and moreover $|N_G(x)\setminus V(T)|=0$ or $|N_G(x)\cap V(T)|\ge 2$. Then $v$ is expanded.
 \item [(O3)] Applies when there is a leaf $v\in V(T)$ such that $|N_G(v)\setminus V(T)|= 1$ (let $N_G(v)\setminus V(T) = \{x\}$), and moreover $|N_G(x)\setminus V(T)|\ge 2$. Then $v$ is expanded and afterwards $x$ is expanded.
\end{enumerate}

Now we can describe the algorithm for Theorem~\ref{thm:n/4}, which we call GENERIC: 
\begin{enumerate}
\item choose an arbitrary vertex $r\in V$ and let $T=\{r\}$, 
\item apply O1-O3 as long as possible, giving precedence to O1.
\end{enumerate}
We claim that GENERIC returns a spanning tree of $G$.
Assume for a contradiction that at some point the algorithm is able to apply none of O1-O3 but $V(G)\ne V(T)$. Consider any leaf $v\in T$ such that $N_G(v) \not \subseteq V(T)$. Such a leaf exists because $V(G)\ne V(T)$ and $T$ is built by a sequence of expansions . Since O1 does not apply, $|N_G(v)\setminus V(T)|= 1$. Let  $N_G(v)\setminus V(T) = \{x\}$. Since O2 does not apply, $N_G(x)\cap V(T) = \{v\}$. Since neither O2 nor O3 apply, $|N_G(x)\setminus V(T)| = 1$. It follows that $\deg_G(x)=2$. Moreover, since there are no edges between 2-vertices, the neighbor of $x$ outside of $T$ is not a 2-vertex. It follows that $\bigcup_{v\in L(T)} N_G(v) \setminus V(T)$ is a separator consisting of 2-vertices, which is the desired contradiction.

It remains to show that if a spanning tree $T$ was constructed, then it has at least $n/4$ leaves. It can be done exactly as in the work of Kleitman and West~\cite{kw:many-leaves}. However, we do it in a different way, in order to introduce and get used to some notation that will be used in later sections, where we describe an improved kernel.

We say that a leaf $u$ of $T$ is {\em dead} if  $N_G(u) \setminus V(T) = \emptyset$.
Note that after performing O2 there is at least one new dead leaf: if $|N_G(x)\setminus V(T)|=0$ then $x$ is a dead leaf, and if $|N_G(x)\cap V(T)|\ge 2$ then all of $(N_G(x)\cap V(T))\setminus \{v\}$ are dead leaves, because of O1 precedence. For any tree $\hat{T}$, by $L(\hat{T})$ we denote the set of leaves of $\hat{T}$.

Let $X_i$ be the set of the inner vertices of $T$ that were expanded by an operation of type O$i$. Let $X$ be the set of the inner vertices of $T$; note that $X=X_1\cup X_2 \cup X_3$. Since $T$ is rooted, the standard notions of parent and children apply. For a positive integer $i$, let $P_i$ denote the set of vertices of $T$ with exactly $i$ children. 

Since every vertex besides $r$ is a child of some vertex, we have $ \sum_{d\ge 1} d|P_d| = n-1.$
\ignore{
\begin{equation}
\label{eq:children}
 \sum_{d\ge 1} d|P_d| = n-1.
\end{equation}
}
Since the set of vertices with one child is equal to $X_2\cup(X_3\cap P_1)$ and $|X_3\cap P_1|=|X_3\cap P_{\ge 2}|$ it follows that
\begin{equation}
\label{eq:children2}
 |X_2| + |X_3\cap P_{\ge 2}| + \sum_{d\ge 2} d|P_d| = n-1.
\end{equation}
Since during O2 at least one leaf dies, $|X_2| \le |L(T)|$.
Similarly, since after expanding a vertex from $X_3\cap P_{\ge 2}$ the cardinality of $L(T)$ increases, $|X_3\cap P_{\ge 2}| \le |L(T)|-1$.
Finally, $\sum_{d\ge 2} d|P_d| \le \sum_{d\ge 2} 2(d-1)|P_d| = 2(|L(T)|-1)$.
After plugging these three bounds to~\eqref{eq:children2} we get $|L(T)| > n/4$, as required. 
This finishes the proof of Theorem~\ref{thm:n/4}.

\subsection{Outerplanarity}
\label{sec:outer}

\begin{lemma}
\label{lem:outer}
If $G$ is a planar graph and $T$ is a spanning tree of $G$, then the graph $G[L(T)]$ is outerplanar.
\end{lemma}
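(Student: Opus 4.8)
The plan is to use the standard \emph{apex characterization} of outerplanarity: a graph $H$ is outerplanar if and only if the graph obtained from $H$ by adding one new vertex adjacent to all vertices of $H$ is planar. Granting this, it suffices to exhibit a planar graph that consists of $G[L(T)]$ together with a single ``universal'' vertex joined to every leaf. I would produce such a graph directly from a planar embedding of $G$ by contracting the internal part of the spanning tree.

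\textbf{Key steps.} First I would set $S = V(G)\setminus L(T)$, the set of internal vertices of $T$, and observe that $T[S]$ is a connected subtree: any two internal vertices are joined in $T$ by a path whose intermediate vertices have $T$-degree at least $2$ and hence are also internal, so $S$ induces a connected subgraph of $T$ (and therefore of $G$). Second, I would contract this connected subgraph $T[S]$ in $G$ to a single vertex $w$; since a contraction of a connected subgraph of a planar graph is again planar (it is a minor operation), the resulting graph $G'$, with vertex set $L(T)\cup\{w\}$, is planar. Third, I would note that when $|V(G)|\ge 3$ no two leaves of $T$ are adjacent in $T$ (an edge between two $T$-leaves would form an isolated component), so every leaf has its unique $T$-neighbour in $S$; consequently in $G'$ the vertex $w$ is adjacent to \emph{every} vertex of $L(T)$. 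Finally, contracting $S$ creates no new edges among the leaves, so after deleting parallel edges the subgraph of $G'$ on $L(T)$ is exactly $G[L(T)]$. Thus $G'$, which is planar, is precisely $G[L(T)]$ with an added universal vertex $w$, and the apex characterization yields that $G[L(T)]$ is outerplanar.

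\textbf{Loose ends and the main obstacle.} The tiny case $|V(G)|\le 2$ (where $T$ is a single edge and $S=\emptyset$) must be dispatched separately, but there $G[L(T)]$ has at most two vertices and is trivially outerplanar. The step I expect to require the most care is verifying that the contraction really does produce the clean picture claimed: that $w$ ends up adjacent to \emph{all} leaves (which relies on the ``no two adjacent leaves'' observation) and that no spurious edges appear among the leaves, so that the simplification of $G'$ is genuinely $G[L(T)]+w$ rather than something larger. The apex characterization of outerplanarity itself is classical and I would simply invoke it, although one could instead argue directly from a planar embedding of $G'$ by using the face incident to $w$ as the outer face after deleting $w$.
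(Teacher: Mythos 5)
Your proof is correct, but it takes a different route than the paper. The paper's proof is a two-sentence direct embedding argument: fix a plane embedding of $G$ and consider the induced plane subgraph $G[L(T)]$; since the internal part of $T$ is connected, disjoint from $G[L(T)]$, and attached to every leaf by a tree edge, it sits inside a single face of $G[L(T)]$ and forces all leaves onto the boundary of that face, which can then be taken as the outer face. You instead package the same geometric content into two standard lemmas: planarity is preserved under contraction of a connected subgraph, and the apex characterization ($H$ is outerplanar iff $H$ plus a universal vertex is planar). Both proofs hinge on the identical key observation --- that the internal vertices $S$ induce a connected subtree adjacent to every leaf --- but the paper exploits it topologically while you exploit it combinatorially via minors. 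What your version buys is explicitness: the paper's claim that ``all vertices of $L(T)$ lie on the same face'' compresses precisely the connectivity-of-$S$ and no-adjacent-leaves facts that you spell out, and your handling of the degenerate cases ($|V(G)|\le 2$, where $S=\emptyset$) is a detail the paper silently absorbs. The cost is that you must invoke the apex characterization as a black box; its standard proof (delete the apex and merge the faces around it into one outer face) is essentially the paper's argument, so your proof can be seen as factoring the paper's reasoning through a known lemma rather than replacing it.
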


\begin{proof}
Fix a plane embedding of $G$ and consider the induced plane subgraph $G' = G[L(T)]$. Since $T$ is connected, all vertices of $L(T)$ lie on the same face of $G'$. Therefore $G'$ is outerplanar.
\end{proof}

\begin{corollary}
If $G$ is a planar graph and $T$ is a spanning tree of $G$ then there is a subset of leaves of $T$ of size at least $|L(T)|/3$ that is independent in $G$ (and it can be found in linear time).
\end{corollary}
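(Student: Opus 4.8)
The plan is to combine the preceding two results. By Lemma~\ref{lem:outer}, the induced graph $G[L(T)]$ is outerplanar, so it suffices to show that every $m$-vertex outerplanar graph contains an independent set of size at least $m/3$ (applied with $m = |L(T)|$). The independence number bound is the combinatorial heart, and it follows from the well-known fact that outerplanar graphs are $3$-colorable.

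\begin{corollary}
If $G$ is a planar graph and $T$ is a spanning tree of $G$ then there is a subset of leaves of $T$ of size at least $|L(T)|/3$ that is independent in $G$ (and it can be found in linear time).
\end{corollary}

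\begin{proof}
By Lemma~\ref{lem:outer} the graph $G[L(T)]$ is outerplanar. It is a standard fact that every outerplanar graph is $3$-colorable: an outerplanar graph is $2$-degenerate (in any subgraph some vertex has degree at most $2$, since adding a universal vertex preserves planarity and a planar graph has a vertex of degree at most $5$; more directly, outerplanar graphs have a vertex of degree at most $2$), so a greedy colouring on a degeneracy ordering uses at most $3$ colours. A proper $3$-colouring of $G[L(T)]$ partitions $L(T)$ into three independent sets, the largest of which has size at least $|L(T)|/3$. Since this set is independent in $G[L(T)]$, and $G[L(T)]$ is the subgraph of $G$ induced by $L(T)$, it is also independent in $G$.

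For the running time, a degeneracy ordering of a graph with a linear number of edges (an outerplanar graph on $m$ vertices has at most $2m-3$ edges) can be computed in linear time by repeatedly removing a vertex of degree at most $2$; the greedy colouring along this ordering and the selection of the largest colour class are also linear.
\end{proof}

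The main obstacle is simply invoking the right background fact cleanly: the $3$-colourability of outerplanar graphs (equivalently, their $2$-degeneracy). Once that is in hand, the pigeonhole step and the linear-time claim are routine.
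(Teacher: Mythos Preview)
Your proof is correct and follows essentially the same approach as the paper: apply Lemma~\ref{lem:outer} to conclude that $G[L(T)]$ is outerplanar, invoke the well-known 3-colorability of outerplanar graphs, and take the largest color class. The paper states the 3-colorability fact without justification, whereas you supply the standard 2-degeneracy argument and spell out the linear-time implementation, but the underlying argument is the same.
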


\begin{proof}
It is well-known that outerplanar graphs are 3-colorable and the 3-coloring can be found in linear time. So, by Lemma~\ref{lem:outer} we can 3-color $G[L(T)]$ and we choose the largest color class.  
\end{proof}

\section{A $\boldmath{9k}$ kernel}
\label{sec:9k}

In this section we present an improved kernel for the \indleafname\ problem. Although the analysis is considerably more involved than that of the $12k$ kernel, the algorithm is almost the same. We need only to force a certain order of the operations O1-O3 in step 2. As before, the algorithm always performs the O1 operation if possible (we will refer to this as {\em the O1 rule}). Moreover, if more than one O1 operation applies then we choose the one which maximizes the number of vertices added to $T$ (we will refer to this as {\em the largest branching rule}). If there is still more than one such operation applicable then among them we choose the one which expands a vertex that was added to $T$ later than the vertices which would be expanded by other operation (we will refer to this as {\em the DFS rule}). Similarly, if there are no O1 operations applicable but more than one O2/O3 operations apply, we also use the DFS rule. 
The algorithm GENERIC with the order of operations described above will be called BRANCHING. 

Note that the algorithm BRANCHING is just a special case of GENERIC, so all the claims we proved in Section~\ref{sec:12k} apply. 
Let us think where the bottleneck in this analysis is. There are two sources of trouble. First, if there are many O2/O3 operations we get a spanning tree with few leaves: in particular there might be only O2 operations and O3 operations that add just two leaves (consider a cubic graph obtained from the cycle $v_0v_1\ldots v_{3k-1}v_0$ by adding $k$ vertices $w_0,\ldots,w_{k-1}$ and for each $i=0,\ldots,k-1$ add the following edges: $w_iv_{3i}, w_iv_{3i+1}, w_iv_{3i+2}$) and we get roughly $n/4$ leaves. Second, if the outerplanar graph $G[L(T)]$ is far from being bipartite (i.e. has many short odd cycles) then we get a small independent set: in particular, when $G[L(T)]$ is a collection of disjoint triangles, the maximum independent set in $G[L(T)]$ is of size exactly $|L(T)|/3$. However, we will show that these two extremes cannot happen simultaneously. More precisely, we prove the following two theorems.

\begin{theorem}
\label{thm:many-leaves}
Let $G$ be a connected $n$-vertex graph that does not contain a separator consisting of only 2-vertices. Then $G$ has a  spanning tree $T$ such that if $\C$ is a collection of vertex-disjoint cycles in $G[L(T)]$, then
\[|L(T)| \ge \frac{n+3|\C|}{4}.\]
Moreover, $T$ can be found in linear time.
\end{theorem}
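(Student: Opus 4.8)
The plan is to run BRANCHING on $G$ (after the same reduction to the case with no edge joining two $2$-vertices used in Theorem~\ref{thm:n/4}), obtaining a spanning tree $T$. Since BRANCHING is a special case of GENERIC, $T$ exists and the identity~\eqref{eq:children2} together with the three bounds of Section~\ref{sec:many-leaves} all hold. Writing $L=|L(T)|$ and using $n=L+|X|$, I would record the target in the equivalent additive form
\[ 4L-n \;=\; \bigl(L-|X_2|\bigr)+\bigl(L-|X_3\cap P_{\ge 2}|\bigr)+\bigl(L-|P_{\ge 2}|\bigr), \]
where the first bracket is $\ge 0$ and the last two are each $\ge 1$, so that the analysis of Theorem~\ref{thm:n/4} is exactly the statement $4L-n\ge 2$. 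The goal becomes: a collection $\C$ of vertex-disjoint cycles in $G[L(T)]$ forces $4L-n\ge 3|\C|$, i.e.\ it injects three extra units of slack per cycle into these brackets.

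Next I would set up a refined bookkeeping of \emph{leaf deaths}. Every leaf dies (loses its last neighbour outside $T$) during exactly one expansion; let $D_1,D_3$ count deaths occurring during O1- and O3-expansions and let $E_2$ count the O2-deaths in excess of one per O2-operation. Since every O2-operation kills at least one leaf, one obtains the identity $L-|X_2|=D_1+D_3+E_2$, so the first bracket is governed entirely by surplus deaths and by deaths occurring during branching steps, while the surplus $\sum_{d\ge 3}(d-2)|P_d|$ is hidden in the last bracket. The structural input from the cycles is that every vertex of a cycle in $G[L(T)]$ is a leaf of degree at least $3$ (parent plus two cycle-neighbours). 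I would exploit that whenever a cycle-vertex already in $T$ still has at least two neighbours outside $T$, the O1 rule applies and is therefore executed; this pins most cycle-insertions to the O1 regime and forces the two cycle-neighbours of the last-inserted cycle-vertex $u$ to be present leaves that die when $u$ is inserted (with $u$ itself possibly born dead). These deaths feed $D_1+D_3+E_2$, and, where the O1 rule is forced, the \emph{largest branching rule} provides expansions that can be made to feed the surplus $\sum_{d\ge 3}(d-2)|P_d|$.

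To combine these into a clean per-cycle count I would, for each $C\in\C$, fix the last-inserted vertex and the expansion that inserts it, and charge the three required units to concrete witnesses near $C$: the (at least two) cycle-neighbours that die at that moment, together with one further unit coming either from a born-dead cycle-vertex, from a forced large branching selected by the largest branching rule, or from the bracket governed by O3-expansions. Vertex-disjointness of $\C$ guarantees that the dying witnesses are distinct across cycles, while the \emph{DFS rule} is what lets me pin down a single well-defined expansion per cycle and keep the branching witnesses disjoint as well.

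The hard part will be showing that \emph{every} cycle yields a full three units \emph{additively}, rather than sharing the universal ``$+2$'' of the baseline: the double death of the two cycle-neighbours accounts for only about two units per cycle, so the delicate work is to locate the third unit in each of the cases for how the cycle's vertices enter $T$ (all via O1; the last via O2; mixtures; and according to whether cycle-vertices have degree exactly $3$ or more) and to verify that the branching or O3 witnesses produced in different cases never collide. This exhaustive case analysis, and the verification that the largest-branching and DFS rules of BRANCHING make the third unit available and injectively chargeable, is exactly where the argument is more involved than the proof of Theorem~\ref{thm:n/4}, and is the main obstacle I expect.
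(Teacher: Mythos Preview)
Your additive identity $4L-n=(L-|X_2|)+(L-|X_3\cap P_{\ge 2}|)+(L-|P_{\ge 2}|)$ is correct and is exactly the paper's equation~\eqref{eq:dupa1} in disguise (use $\sum_{d\ge 2}(2d-3)|P_d|=2(L-1)-|P_{\ge 2}|$), and the target inequality you isolate is the paper's~\eqref{eq:dupa2}. So the skeleton matches; the divergence is in how the three units per cycle are earned.

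The paper does \emph{not} charge via the last-inserted vertex of a cycle. It introduces a decomposition of the expansion sequence into \emph{runs}: maximal blocks of consecutively expanded vertices lying in $P_{\ge 2}$. By the O1 and DFS rules each run forms a subtree $T_R$ of $T$ (Lemma~\ref{lem:subtree}), and the key structural fact (Lemma~\ref{lem:cycle-run}) is that the \emph{first} vertex of any cycle $C\in\C$ to enter $T$ is a child of some run's subtree---because if it entered via O2 it would immediately be expanded by the O1 rule and fail to be a leaf. The per-cycle analysis (Lemma~\ref{lem:one-cycle}) then shows that at least three, and usually four, vertices of $C$ are either children of $T_R$ or unassigned leaves; the only exception is a triangle sitting entirely in $\ch(T_R)$, where the largest-branching rule forces $R$ to contain a vertex of $P_{\ge 3}$. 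Summing per run (Lemma~\ref{lem:run}) converts these counts into the slack $|L_u|+\sum_{d\ge 2}(2d-3)|P_d|-|X_3\cap P_{\ge 2}|$, using that $|\ch(T_R)|=\sum_d d|R\cap P_d|-(|R|-1)$ and that a run contains at most one $X_3$-vertex.

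Your last-vertex plan has a concrete gap: the assertion that the two cycle-neighbours of the last-inserted cycle-vertex $u$ ``die when $u$ is inserted'' is not justified. Those neighbours are leaves of the final tree, but at the moment $u$ enters they may still have non-cycle neighbours outside $T$; when they eventually die it may well be as the single assigned death of some later O2, contributing nothing to $L-|X_2|$. The first-vertex viewpoint avoids this precisely because the O1 rule controls what happens \emph{immediately after} a cycle-vertex with two outside neighbours appears, and the run gives a container $\ch(T_R)$ in which cycle-vertices can be counted and against which the degree surplus $\sum_{d\ge 3}(d-2)|P_d|$ can be charged (since $|\ch(T_R)|$ is expressible in those terms). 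Without a structure playing the role of runs, I do not see how you make the branching and death witnesses injectively chargeable across cycles, which is exactly the obstacle you flag as unresolved.
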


\begin{theorem}
\label{thm:big-is}
Every $n$-vertex outerplanar graph contains
\begin{itemize}
 \item an independent set $I$, and
 \item a collection of vertex-disjoint cycles $\C$
\end{itemize}
such that $9|I|\ge 4n-3|\C|$.
\end{theorem}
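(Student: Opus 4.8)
The plan is to prove the statement by induction on $n$, building $I$ and $\C$ by peeling off small pieces of $G$ while maintaining the equivalent target $9|I| + 3|\C| \ge 4n$; the extremal configuration to keep in mind is a disjoint union of triangles, where every triangle is forced to contribute one vertex to $I$ \emph{and} its whole self to $\C$, so the accounting must stay tight at a ``rate'' of $4$ units of the right-hand side per deleted vertex. First I would reduce to the $2$-connected case: distinct components and blocks meeting in at most one cut vertex can be solved separately and glued, the only care being that a cut vertex may be used by a packed cycle in at most one block and may force a small local adjustment of the independent set, which the surplus produced below should absorb. I would exploit that $I$ and $\C$ are \emph{not} required to be disjoint from each other, which gives welcome freedom when a vertex is simultaneously useful in both.

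Then I would set up the following clean local reductions, each deleting a closed piece and paying for itself at rate at least $4$. (R0) If $G$ has an isolated vertex $v$, put $v\in I$ and recurse on $G-v$ (gain $9$ for one deleted vertex). (R1) If $v$ has degree $1$ with neighbour $u$, put $v\in I$ and recurse on $G-\{u,v\}$ (gain $9$ for two vertices); deleting $u$ as well kills any independence clash at the boundary. (R2) If $v$ has degree $2$ with \emph{adjacent} neighbours $u,w$, then $uvw$ is a triangle: put $v\in I$, put the triangle into $\C$, and recurse on $G-\{u,v,w\}$ (gain $9+3=12$ for three vertices, exactly rate $4$). In each case the recursively obtained $I'',\C''$ extend without conflict because every boundary vertex that could clash has been deleted, and the desired inequality follows from the inductive one by the per-vertex bookkeeping. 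Since outerplanar graphs are $2$-degenerate, some vertex always has degree at most $2$; hence the only configuration left uncovered is a degree-$2$ vertex whose two neighbours are non-adjacent.

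This last case is the crux. In a fixed outerplanar embedding of a $2$-connected block, a degree-$2$ vertex with non-adjacent neighbours lies on a bounded face of length at least $4$, and when no (R0)--(R2) move applies the weak dual of the block is a tree all of whose leaf faces are such long faces. I would therefore induct on this tree by peeling a leaf face: a chord $xy$ together with an \emph{ear} $x=p_0,p_1,\dots,p_m=y$ with $m\ge 3$ whose interior vertices $p_1,\dots,p_{m-1}$ have degree $2$. Deleting the interior and recursing on the smaller block (which keeps the edge $xy$), I would harvest an independent set from the interior path; a helpful fact is that, since $xy$ is an edge, $x$ and $y$ cannot both lie in the recursively chosen independent set, so at most one endpoint of the interior path is blocked. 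The main obstacle is the accounting for the shortest long faces (lengths $4$ and $5$): an independent set taken from the ear interior alone falls short of rate $4$, so one must either also pack the ear's face as a cycle---which requires $x,y$ to be free of the recursively packed cycles---or perform an exchange relocating the blocked endpoint. Designing an induction hypothesis strong enough to guarantee this boundary freedom is the delicate point, because the obvious strengthenings (forbidding the root vertices from $I$, or from $\C$) are already violated by a single triangle, which is forced to spend all three of its vertices. I expect the real work to lie precisely in choosing the right invariant on the two attachment vertices of each peeled face and in verifying the tight rate-$4$ balance against the disjoint-triangle extremal case.
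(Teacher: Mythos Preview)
Your outline is on the right track and matches the paper's skeleton: induction on $n$, dispose of vertices of degree at most one, then work inside a leaf block and peel a leaf face of its weak dual. The place where you diverge from the paper, and where your argument stalls, is the decision to delete only the ear \emph{interior} $p_1,\dots,p_{m-1}$ while keeping the chord endpoints $x,y$ in the recursive instance. As you yourself diagnose, this forces you to control whether $x$ and $y$ land in the recursive $I$ or $\C$, and you do not propose a workable invariant; indeed a single triangle already saturates all three of its vertices, so any strengthening that forbids the attachment vertices from $I$ or from $\C$ fails at the base. Concretely, for an odd leaf face of length $5$ your scheme deletes three interior vertices but, with one endpoint blocked, harvests only one of them into $I$, giving $9<12$.

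The paper sidesteps this entirely by deleting \emph{more}: it removes the whole leaf face $f=v_1\cdots v_\ell$, chord endpoints included, and recurses on $H-V(f)$. For odd $\ell$ (their Case~2) one puts $v_2,v_4,\dots,v_{\ell-1}$ into $I$ and the cycle $f$ itself into $\C$; since $v_1,v_\ell$ are absent from the recursive graph there is no boundary conflict with $I_0$ or $\C_0$, and $9\cdot\tfrac{\ell-1}{2}+3\ge 4\ell$ holds for every odd $\ell\ge 3$ (your (R2) is precisely the case $\ell=3$). An even leaf face alone is not enough, so the paper additionally deletes one neighbour $w$ of $v_\ell$ outside $f$ and takes $v_2,v_4,\dots,v_\ell\in I$ together with $f\in\C$ (Case~3); when this is blocked because $v_\ell$ borders two even leaf faces, both faces are peeled together (Case~4). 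A short structural argument then shows that if none of Cases~1--4 applies, the leaf block has at most two bounded faces and is finished directly (Case~5). Two further remarks: the paper does \emph{not} reduce to $2$-connected blocks but handles the single cut vertex of the chosen leaf block inline, simply avoiding it when choosing which alternate vertices of the face enter $I$; and the odd/even split for leaf faces, which your sketch does not anticipate, is exactly what drives the remaining case analysis.
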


Note that the bound in Theorem~\ref{thm:big-is} is tight, which is easy to see by considering an outerplanar graph consisting of disjoint triangles. From the above two theorems and Lemma~\ref{lem:outer} we easily get the following corollary.
\begin{corollary}
Let $G$ be a connected $n$-vertex graph that does not contain a separator consisting of only 2-vertices. Then $G$ has a  spanning tree $T$ such that $L(T)$ has a subset of size at least $n/9$ which is independent in $G$. Equivalently, $G$ has a nonseparating independent set of size at least $n/9$ and a connected vertex cover of size at most $\frac{8}{9}n$.
\end{corollary}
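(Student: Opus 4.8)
The plan is to chain together the three results already in hand, in the natural order. First I would invoke Theorem~\ref{thm:many-leaves} to fix a spanning tree $T$ of $G$ that can be found in linear time and enjoys the crucial \emph{quantified} guarantee
\[|L(T)| \ge \frac{n+3|\C|}{4}\]
for \emph{every} collection $\C$ of vertex-disjoint cycles in $G[L(T)]$. The point to keep in mind is that this tree is chosen once and for all, before any particular cycle collection is selected, and the bound holds uniformly over all such $\C$. Since $G$ is planar (we are in the planar section), Lemma~\ref{lem:outer} then tells us that $H := G[L(T)]$ is outerplanar, with $|V(H)| = |L(T)|$.

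Next I would apply Theorem~\ref{thm:big-is} to the outerplanar graph $H$. This produces an independent set $I$ of $H$ together with a specific collection $\C_0$ of vertex-disjoint cycles of $H$ satisfying $9|I| \ge 4|L(T)| - 3|\C_0|$. The whole argument now turns on the observation that $\C_0$ is itself a collection of vertex-disjoint cycles in $G[L(T)]$, so it is an admissible choice of $\C$ in Theorem~\ref{thm:many-leaves}. Plugging $\C_0$ into the bound from Theorem~\ref{thm:many-leaves} gives $4|L(T)| - 3|\C_0| \ge n$, and combining this with the inequality from Theorem~\ref{thm:big-is} yields $9|I| \ge 4|L(T)| - 3|\C_0| \ge n$, i.e.\ $|I| \ge n/9$. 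Since $I$ is independent in $H = G[L(T)]$ it is independent in $G$, and by construction $I \subseteq L(T)$, which establishes the first sentence of the corollary.

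For the equivalent reformulations I would use the tree structure. Because $I$ consists solely of leaves of $T$, deleting the vertices of $I$ from $T$ removes only leaves and hence leaves a connected subtree spanning $V \setminus I$; consequently $G[V \setminus I]$ is connected and $I$ is a nonseparating independent set of size at least $n/9$. Dually, its complement $C := V \setminus I$ is a vertex cover (every edge has an endpoint outside the independent set $I$) inducing a connected subgraph, so $C$ is a connected vertex cover with $|C| = n - |I| \le \tfrac{8}{9}n$. This is exactly the equivalence between the \indleafname\ formulation and NSIS/CVC noted earlier in the paper.

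The steps are individually routine; the only genuine content is the combinatorial matching of the two quantifiers. The mild subtlety worth stating explicitly is that Theorem~\ref{thm:many-leaves} must hold for \emph{all} cycle collections precisely so that it can absorb the particular $\C_0$ delivered by Theorem~\ref{thm:big-is} after the outerplanar structure has been analysed; a tree guaranteeing the leaf bound for only some fixed $\C$ would not suffice, since the cycle collection is discovered only at the outerplanar stage. Beyond this, the proof is just the arithmetic of combining $4|L(T)| - 3|\C_0| \ge n$ with $9|I| \ge 4|L(T)| - 3|\C_0|$.
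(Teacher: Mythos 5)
Your proposal is correct and follows exactly the paper's intended argument (the paper only remarks that the corollary follows from Theorem~\ref{thm:many-leaves}, Theorem~\ref{thm:big-is} and Lemma~\ref{lem:outer}): fix the tree from Theorem~\ref{thm:many-leaves}, note $G[L(T)]$ is outerplanar, feed the cycle collection produced by Theorem~\ref{thm:big-is} back into the universally quantified bound of Theorem~\ref{thm:many-leaves}, and finish with the standard leaf-deletion argument for the NSIS/CVC equivalence. You also correctly read the (implicitly assumed) planarity of $G$, which the corollary's wording omits but which is needed for Lemma~\ref{lem:outer}, and your emphasis on the quantifier order --- the tree is fixed before the cycle collection is chosen --- is precisely the point that makes the composition work.
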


By a similar reasoning as in the beginning of Section~\ref{sec:12k} we get a $9k$-kernel for the \indleafname\ problem. In what follows, we prove Theorem~\ref{thm:many-leaves} and Theorem~\ref{thm:big-is}.

\subsection{Proof of Theorem~\ref{thm:many-leaves}}

Note that similarly as in Theorem~\ref{thm:n/4} it suffices to show a simplified case when $G$ has no edge $uv$ such that both $u$ and $v$ are 2-vertices. Indeed, if the theorem holds for the simplified case, as before we create a new graph $\tilde{G}$  by removing from $G$ all the edges with both endpoints of degree 2 and as before $\tilde{G}$ does not contain a separator consisting of only 2-vertices. Then we apply the simplified case and we get a spanning tree $T$ such that for any collection  $\C$ of vertex-disjoint cycles in $\tilde{G}[L(T)]$, we have $|L(T)| \ge (|V(\tilde{G})|+3|\C|)/4$ and $T$ is a spanning tree of $G$ as well. Moreover, no edge of $E(\tilde{G})\setminus E(G)$ belongs to a cycle in $G[L(T)]$ for otherwise both of its endpoints have degree at least 3 in $G$. Hence if $\C$ is a collection of vertex-disjoint cycles in $G[L(T)]$ then it is also a collection of vertex-disjoint cycles in $\tilde{G}[L(T)]$, so the desired inequality holds.
Hence in what follows we assume that $G$ has no edge with both endpoints of degree 2. Since we proved that in this case the algorithm GENERIC returns a spanning tree, and each execution of BRANCHING is just a special case of an execution of GENERIC we infer that BRANCHING returns a spanning tree of $G$, which will be denoted by $T$.

Let $\C$ be an arbitrary collection of vertex-disjoint cycles in $G[L(T)]$. Our general plan for proving the claim of Theorem~\ref{thm:many-leaves} is to show that if $|\C|$ is large then we have few O2/O3 operations --- by~\eqref{eq:children2} this will improve our bound on $|L(T)|$. To be more precise, let us introduce several definitions.

Recall the O2 operation: it adds a single vertex $x$ to $T$ and at least one leaf of $T$ dies. We choose exactly one of these dead leaves and we {\em assign} it to $x$. However, if the vertex $x$ dies we always assign $x$ to itself (so if some other leaves die during this operation, they are unassigned). Let $L_u$ be the set of unassigned leaves of $T$. Clearly, $|X_2|=|L(T)|-|L_u|$. In order to show that there are few O2 operations, we will show that $|L_u|$ is big.

Let $x_1,x_2,\ldots x_{|X|}$ be the inner vertices of $T$ in the order of expanding them (in particular $x_1=r$). 
A {\em run} is a maximal subsequence $x_b,x_{b+1},\ldots,x_e$ of vertices from $P_{\ge 2}$, i.e.\ the nodes in $T$ that have at least two children.

\begin{lemma}
\label{lem:subtree}
 Vertices of any run $R=x_b,\ldots,x_e$ form a subtree of $T$ rooted at $x_b$.
\end{lemma}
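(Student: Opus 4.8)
The plan is to exploit the O1-precedence rule together with a monotonicity property of O1-expandability. First I would record two structural facts about a run $R=x_b,\dots,x_e$. Since the vertices of $R$ all lie in $P_{\ge 2}$, while an O3 operation always produces a $P_1$ vertex (its first expanded vertex) immediately followed by a $P_{\ge 2}$ vertex, none of $x_{b+1},\dots,x_e$ can be the second vertex of an O3 step (its $P_1$ predecessor would break the run). Hence $x_b$ is expanded either by O1 or as the second vertex of an O3 step, whereas each of $x_{b+1},\dots,x_e$ is expanded by O1. In particular, after $x_b$ every expansion in the run is an O1 expansion, so each $x_{i+1}$ with $b\le i<e$ is chosen among the currently O1-expandable leaves.

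The key step is to show that, just before $x_b$ is expanded, $x_b$ is the \emph{unique} O1-expandable leaf. If $b=1$ this is trivial, since $T=\{r\}$ has a single leaf. Otherwise $x_{b-1}\in P_1$, so it is expanded by O2 or as the first vertex of an O3 step; either way this happens only when no O1 operation is applicable, and such an expansion adds exactly one new vertex to $T$. By the definitions of O2 and O3 that single new vertex is exactly $x_b$, and it is O1-expandable (it has at least two neighbours outside $T$). Thus immediately before $x_b$ is expanded there is no O1-expandable leaf other than $x_b$ itself.

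I would then combine this with the elementary observation that adding a vertex $z$ to $T$ never increases $|N_G(u)\setminus V(T)|$ for any other leaf $u$; hence a leaf that is not O1-expandable can never become O1-expandable later, and the only leaves that can be O1-expandable just after an expansion are the children created by that expansion. Starting from the unique O1-expandable leaf $x_b$, an easy induction on the run shows that after expanding any $x_i$ with $b\le i<e$, every O1-expandable leaf is a descendant of $x_b$: the freshly created leaves are children of $x_i$ (a descendant of $x_b$ by the induction hypothesis), and every previously O1-expandable leaf was already a descendant of $x_b$. Consequently the O1-expanded vertex $x_{i+1}$ is a descendant of $x_b$.

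Finally I would convert ``descendant'' into the subtree statement. Since $x_{i+1}$ is a strict descendant of $x_b$, its parent $p$ in $T$ equals $x_b$ or is a descendant of $x_b$, and $p$ was expanded before $x_{i+1}$; as every descendant of $x_b$ is added to $T$ no earlier than time $b$, the vertex $p$ is expanded at some time $j$ with $b\le j\le i$, so $p=x_j$ lies in the run. Thus every vertex of $R$ other than $x_b$ has its parent in $R$, and the vertices of $R$ induce a subtree of $T$ rooted at $x_b$. The main obstacle is the O1-precedence rule, which could in principle force the algorithm to jump to an O1-expandable leaf lying outside the subtree of $x_b$, thereby placing a non-descendant into the run; the uniqueness statement at the start of the run, propagated by the monotonicity of O1-expandability, is precisely what rules this out (note that neither the largest-branching rule nor the DFS rule is needed for this lemma).
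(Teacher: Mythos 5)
Your proof is correct, but it is organized quite differently from the paper's. The paper disposes of the lemma with a single contradiction argument: if some $x\in\{x_{b+1},\dots,x_e\}$ had its parent $x_p$ with $p<b$, then $x$ would already be a leaf of $T$ when $x_{b-1}$ was expanded by O2/O3; since $x\in P_{\ge 2}$, its (at least two) future children were still outside $T$ at that moment, so $x$ was O1-expandable, contradicting the O1 rule. You instead prove a stronger, global invariant: starting from the observation that $x_b$ is the only O1-expandable leaf when the run begins, and using the monotonicity of $|N_G(u)\setminus V(T)|$, you show by induction that throughout the run every O1-expandable leaf is a descendant of $x_b$, and only then convert descendancy into ``parent in the run''. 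What your route buys is robustness at the boundary of the run: the paper's sentence ``when this operation was performed, it was possible to expand $x$ by O1'' is, strictly speaking, inapplicable when $x$ is the very vertex added by the operation that expands $x_{b-1}$ (then $x\notin V(T)$ at the moment the operation is chosen); your uniqueness-plus-monotonicity argument covers this case cleanly, and it also makes explicit that neither the largest-branching nor the DFS rule is needed. The price is length, and one spot where you are too quick: the claim that ``by the definitions of O2 and O3'' the single vertex added when $x_{b-1}$ is expanded equals $x_b$ and is O1-expandable is definitional only for O3. For O2 it needs an argument: since $x_b\in P_{\ge 2}$, your own classification shows $x_b$ must be expanded by O1 (it cannot be the second vertex of an O3 step, because $x_{b-1}$ was expanded by O2), hence $x_b$ is O1-expandable immediately after that O2; by monotonicity and O1-precedence the only possible such leaf is the vertex just added, so the two coincide. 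All ingredients for this are already present in your write-up, so this is a presentational shortcut rather than a mathematical gap.
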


\begin{proof}
Assume that a vertex $x\in \{x_{b+1},\ldots,x_e\}$ has the parent $x_p$ outside the run. Then $p<b$ and $x$ was a leaf in $T$ while $x_b$ was being expanded. Hence, by the definition of a run, $x_{b-1}\in P_1$, and in particular $x_{b-1}$ was expanded by O2 or O3. However, when this operation was performed, it was possible to expand $x$ by O1, a contradiction with the O1 rule. Hence every vertex $x\in \{x_{b+1},\ldots,x_e\}$ has the parent in the run, which is equivalent to the claim of the lemma. 
\end{proof}

In what follows, the subtree from Lemma~\ref{lem:subtree} is denoted by $T_R$. 
Moreover, let $\ch(T_R)$ denote the set of children of the leaves of $T_R$, i.e.\ \[\ch(T_R)=\{v \in V(T)\setminus V(T_R)\ :\ \text{$T_R$ contains the parent of $v$}\}.\]
We say that a run $R$ {\em opens} a cycle $C$ in $\C$ if the first vertex of $C$ that was added to $T$ belongs to $\ch(T_R)$. The following lemma shows a relation between cycles in $\C$ and runs.

\begin{lemma}
\label{lem:cycle-run}
Every cycle in $\C$ is opened by some run.
\end{lemma}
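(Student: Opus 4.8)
The plan is to pin down, for a given cycle $C \in \C$, the vertex $u$ of $C$ that was inserted into $T$ first, and to prove that its parent $p(u)$ lies in $P_{\ge 2}$. That is exactly what is needed: by Lemma~\ref{lem:subtree} the vertex $p(u)$ then belongs to some run $R$ whose vertices form the subtree $T_R$, and since $u$ is a leaf it is not an inner vertex and so $u \notin V(T_R)$, while $p(u) \in V(T_R)$; hence $u \in \ch(T_R)$ and $R$ opens $C$. The lever I would carry through the whole argument is the following observation: because $C$ is a cycle of $G[L(T)]$, the vertex $u$ has two distinct neighbours $u',u''$ on $C$, and both of them entered $T$ after $u$. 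Therefore, at the instant $u$ was added, both $u'$ and $u''$ were still outside $T$, so $|N_G(u)\setminus V(T)| \ge 2$ at that moment.

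I would then split according to the operation that expanded $p(u)$, i.e.\ the operation that inserted $u$. If it was O1, then $p(u)$ received at least two children and $p(u)\in P_{\ge 2}$, so we are done immediately. If it was O3, I would use that O3 expands a vertex and afterwards its unique child, the child landing in $P_{\ge 2}$ and the expanded vertex in $P_1$; since $u$ is a leaf of the final tree it cannot be the expanded child, so $u$ must instead be one of the children of that child, which forces $p(u)\in P_{\ge 2}$.

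The case I expect to be the real obstacle is O2, where $p(u)$ receives $u$ as its only child and thus sits in $P_1$, which would break the desired conclusion. To rule this out I would combine the lever above with the O1 precedence rule. Just before the O2 step no O1 operation was applicable, so at that moment every leaf of $T$ had at most one neighbour outside $T$. Inserting $u$ can only delete $u$ from the outside-neighbourhoods of leaves adjacent to it, so it does not raise the outside-degree of any leaf other than $u$; meanwhile $|N_G(u)\setminus V(T)|\ge 2$ makes $u$ itself a leaf with at least two outside neighbours. Hence immediately after the insertion $u$ is the \emph{unique} leaf to which O1 applies, and by the O1 rule the very next operation would have to expand $u$ — contradicting that $u$ is a leaf of $T$. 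Thus the O2 case never produces the first vertex of a cycle, and in the two remaining cases $p(u)\in P_{\ge 2}$, which completes the proof.
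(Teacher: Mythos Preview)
Your proof is correct and follows essentially the same approach as the paper's: both arguments take the first-added vertex of $C$, observe it has at least two neighbours still outside $T$ at insertion time, and use O1 precedence to rule out that it was added by O2 (since it would then be expanded next and not remain a leaf), concluding that its parent lies in $P_{\ge 2}$ and hence in some run. Your version is simply more explicit---you spell out the O1 and O3 cases and the uniqueness of $u$ as an O1-eligible leaf after the hypothetical O2 step---whereas the paper compresses all of this into two sentences; one small remark is that the fact that $p(u)\in P_{\ge 2}$ implies $p(u)$ lies in some run follows directly from the definition of a run (maximal consecutive block of $P_{\ge 2}$ vertices in expansion order), not from Lemma~\ref{lem:subtree}, which you cite for the subtree structure.
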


\begin{proof}
Consider any cycle $C\in\C$ and let $v$ be the first vertex of $C$ that is added to $T$. Note that $v$ is not added by O2, for otherwise just after adding $v$ to $T$, $v$ has at least two neighbors and by the O1 rule $v$ would be the next vertex expanded and hence not a leaf of $T$, a contradiction.
It follows that $v$ is added by O1 or O3 and consequently $v\in \ch(T_R)$ for some run $R$. 
\end{proof}

Now we can sketch our idea for bounding the number of O3 operations (\#O3). Both after O1 and O3 the cardinality of $L(T)$ increases. Hence, if we fix the number of leaves in the final tree, then if $|X_1|$ is large then \#O3 should be small. Since a run contains at most one vertex of $|X_3|$ (e.g.\ by Lemma~\ref{lem:subtree}), it means that a tree $T_R$ with a large number of children contains plenty of vertices from $|X_1|$. We will show that if a run opens many cycles, then indeed $|\ch(T_R)|$ is large. Let $\C_R$ denote the set of cycles in $\C$ opened by $R$. 

\begin{lemma}
 \label{lem:one-cycle}
 Let $R$ be a run. For any cycle $C\in\C_R$ one of the following conditions holds:
 \begin{enumerate}[$(i)$]
  \item $|\ch(T_R)\cap V(C)| + |L_u\cap V(C)| \ge 4$, or
  \item $|\ch(T_R)\cap V(C)| + |L_u\cap V(C)| = 3$ and $|R\cap P_{\ge 3}|\ge 1$.
 \end{enumerate}
\end{lemma}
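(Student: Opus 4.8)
The plan is to fix the cycle $C\in\C_R$ and argue from its first vertex. Let $v_0$ be the vertex of $C$ added to $T$ earliest; by the definition of opening and by Lemma~\ref{lem:cycle-run}, $v_0\in\ch(T_R)$, so its parent $x_j$ lies in the run $R$ and hence $x_j\in P_{\ge 2}$, having been expanded either by O1 or as the second (the ``$x$'') vertex of an O3. Let $a,b$ be the two neighbours of $v_0$ on $C$; they are distinct leaves of $T$, each added no earlier than $v_0$. The elementary fact I would use throughout is that both O1 and O3 add to $T$ \emph{all} current non-tree neighbours of the expanded vertex at once; consequently $a$ (resp.\ $b$) becomes a child of $x_j$ precisely when $a$ (resp.\ $b$) is adjacent to $x_j$.

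First I would dispose of the clean case. \textbf{Case 1:} both $a$ and $b$ are adjacent to $x_j$. Then $v_0,a,b$ are three distinct children of $x_j$, so $x_j\in P_{\ge 3}$, and since $x_j\in R$ this already gives $|R\cap P_{\ge 3}|\ge 1$; moreover $\{v_0,a,b\}\subseteq\ch(T_R)\cap V(C)$, so the count $|\ch(T_R)\cap V(C)|+|L_u\cap V(C)|$ is at least $3$. Thus we are in case $(i)$ if the count exceeds $3$ and in case $(ii)$ otherwise. Note that this already exhibits where the $P_{\ge 3}$ witness of $(ii)$ comes from.

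In the remaining case at least one neighbour, say $b$, enters $T$ strictly after $v_0$, so immediately after $x_j$ is expanded $v_0$ is a leaf with $b\notin V(T)$. Here I would run two complementary mechanisms to collect the missing cycle vertices. The first is \emph{capture as a run-child}: because the algorithm applies the O1 rule with absolute precedence and, among O1 moves, the largest branching rule together with the DFS rule, the run cannot terminate at a freshly created leaf that still has at least two non-tree neighbours; since $v_0$ itself is never expanded (it is a leaf of the final tree), whenever $v_0$ has $\ge 2$ non-tree neighbours some sibling or descendant of at least the same branching is expanded inside the run, so the run descends and drops further vertices of $C$ into $\ch(T_R)$. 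The second mechanism is \emph{death in an O2 move}: when a cycle leaf loses its last non-tree neighbour during an O2 operation it becomes a dead leaf, and since each O2 kills at least one leaf while at most one dead leaf per move is assigned to its new vertex, such a leaf can be placed in $L_u$. The body of the argument is then a case analysis on how $a$ and $b$ (and, when $C$ is short, the rest of $C$) enter $T$ --- each either captured into $\ch(T_R)$ or killed into $L_u$ --- showing that together with $v_0$ at least three vertices of $C$ lie in $\ch(T_R)\cup L_u$.

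The hard part, and the place where the two halves of the statement interact, is the boundary case where exactly three vertices of $C$ are collected. There I would show that the tight configuration forces a run vertex with three children: either the largest branching rule would otherwise have been violated (a strictly larger expansion was available at the moment $v_0$'s neighbours were added, contradicting that $v_0$ stayed a leaf), or the third cycle vertex could only have been introduced by an O3 whose ``$x$''-vertex receives at least three children; in both situations $R\cap P_{\ge 3}\ne\emptyset$, giving case $(ii)$. The main obstacle is exactly this bookkeeping: carrying out the full case analysis over how many of $v_0$'s cycle-neighbours arrive simultaneously, whether each later cycle vertex is captured into $\ch(T_R)$ or dies into $L_u$, and pinning down the $P_{\ge 3}$ witness precisely in the tight case. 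The delicate ingredients are the interplay of the largest-branching and DFS tie-breaks and the exact $L_u$-assignment convention, which must be used quantitatively rather than merely qualitatively.
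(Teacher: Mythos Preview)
Your proposal is a plan rather than a proof: the central case analysis (``remaining case'') is explicitly left undone, and the two ``mechanisms'' are described only qualitatively. Two concrete gaps prevent the sketch from closing.

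First, your $L_u$-mechanism is not strong enough as stated. You write that when a cycle leaf $v$ dies during an O2 move ``such a leaf can be placed in $L_u$'', but $L_u$ is not something the proof chooses freely: exactly one dead leaf per O2 is assigned, and the convention forces self-assignment when the newly added vertex itself dies. So merely observing that $v$ dies in some O2 does \emph{not} yield $v\in L_u$; $v$ might be the assigned leaf. The paper's argument is sharper: once one shows that a cycle vertex $v_2\in\ch(T_R)$ has, at the end of the run, a single non-tree neighbour $w$ (again by O1-precedence), then if $v_2$ were assigned it would be assigned to $w$; but the O2 that adds $w$ must kill $w$ as well (otherwise the DFS rule would expand $w$, contradicting $w\in L(T)$), so by the self-assignment convention $w$ is assigned to itself and $v_2$ is forced into $L_u$. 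This is the quantitative use of the assignment convention you allude to but do not supply.

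Second, your organisation makes the argument harder than necessary. The paper does not branch on whether both cycle-neighbours $a,b$ of $v_1$ are adjacent to the parent $x_j$. Instead it proves directly, from O1-precedence at the end of the run, that at least one cycle-neighbour $v_2$ of $v_1$ lies in $\ch(T_R)$; then, for each of the two ``outer'' neighbours ($u$ of $v_1$ and $w$ of $v_2$), the dichotomy ``in $\ch(T_R)$ or the intermediate vertex is in $L_u$'' follows by the mechanism above. This immediately gives four contributions unless $u=w$ (so $C$ is a triangle), all three vertices lie in $\ch(T_R)$, and $v_1,v_2\notin L_u$. Only in that single tight configuration does one need a $P_{\ge 3}$ witness, obtained via the largest-branching and DFS rules applied to the parents of $v_1$ and $v_2$. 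Your Case~1 (all three sharing parent $x_j$) is a special instance of this tight case, but not the only one; in general $v_1,v_2,w$ may have different parents inside $R$, and pinning down the $P_{\ge 3}$ vertex then requires comparing the expansion of $v_2$'s parent with the (at least three) non-tree neighbours available at $v_1$ at that moment.
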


\begin{proof}
Let $v_1$ be the vertex of $C$ that is added first to the tree $T$.
By the definition of $\C_R$, $v_1\in \ch(T_R)$.
We see that at least one neighbor of $v_1$, call it $v_2$, is in $\ch(T_R)$, for otherwise just after expanding the last vertex of $R$ the vertex $v_1$ can be expanded by O1, while the algorithm chooses O2/O3, a contradiction with the O1 rule.

Let $w$ be the neighbor of $v_2$ on $C$ that is distinct from $v_1$. 
Assume $w\not\in\ch(T_R)$. Then just after expanding the last vertex of $R$ we have $N(v_2)\setminus T = \{w\}$, since if $|N(v_2)\setminus T| \ge 2$ then it is possible to expand $v_2$ by O1. Hence if $v_2$ is assigned then it is assigned to $w$. However, then $w$ is added to $T$ by O2 so 
$w$ dies during this operation (otherwise $w$ is expanded because of the DFS rule so $w\not\in L(T)$), and hence $w$ is assigned to $w$ and $v_2\in L_u$. To conclude, $w\in\ch(T_R)$ or $v_2\in L_u$.

If we denote by $u$ the neighbor of $v_1$ on $C$ that is distinct from $v_2$, by the same argument we get $u\in\ch(T_R)$ or $v_1\in L_u$.

It follows that $(i)$ holds, unless $u=w$ (i.e.\ $C$ is a triangle), $v_1,v_2 \not\in L_u$ and $w\in\ch(T_R)$. Let us investigate this last case.
We see that $|\ch(T_R)\cap V(C)|=3$. We will show that $|R\cap P_{\ge 3}|\ge 1$.
Since $v_2,w\in\ch(T_R)$, they could not be added by O2 and hence $v_1$ is assigned to a vertex $x\not\in V(C)$.
Note that $x$ is added to $T$ after $v_2$ and $w$.
Assume w.l.o.g. that $v_2$ was added to $T$ before $w$. 
We consider two cases.
If $v_2$ was not added to $T$ by expanding the parent of $v_1$, then the parent $p$ of $v_2$ has at least three children (otherwise instead of expanding $p$ the algorithm can expands $v_1$ and add at least three children, a contradiction with the largest branching rule), so $|R\cap P_{\ge 3}|\ge 1$ as required.
Finally, if $v_2$ was added to $T$ by expanding the parent $p$ of $v_1$, then $p\in P_{\ge 3}$ for otherwise just after expanding $p$ O1 is applicable to $v_1$ so either $v_1$ or $v_2$ is expanded by the DFS rule. This concludes the proof.
\end{proof}

By applying Lemma~\ref{lem:one-cycle} to all cycles of a single run $R$ we get the following corollary.

\begin{corollary}
\label{cor:run}
For any run $R$ that opens at least one cycle, 
\[|\ch(T_R)\cap V(\C_R)| + |L_u\cap V(\C_R)| + |R\cap P_{\ge 3}| \ge 3|\C_R|+1.\]

\end{corollary}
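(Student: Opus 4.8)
The plan is to obtain Corollary~\ref{cor:run} by simply summing the per-cycle estimate of Lemma~\ref{lem:one-cycle} over all cycles $C\in\C_R$, exploiting that these cycles are pairwise vertex-disjoint. First I would record that, since the cycles of $\C$ are vertex-disjoint and $\C_R\subseteq\C$, the sets $V(C)$ for $C\in\C_R$ are disjoint and their union is $V(\C_R)$, so the counts split additively:
\[\sum_{C\in\C_R}\bigl(|\ch(T_R)\cap V(C)| + |L_u\cap V(C)|\bigr) = |\ch(T_R)\cap V(\C_R)| + |L_u\cap V(\C_R)|.\]
Thus it suffices to bound the left-hand side from below.

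Next, Lemma~\ref{lem:one-cycle} tells me that for each individual cycle $C$ the summand $s_C := |\ch(T_R)\cap V(C)| + |L_u\cap V(C)|$ satisfies $s_C\ge 3$, and moreover that the only way $s_C$ can equal $3$ rather than reach $4$ is via alternative $(ii)$, in which case $|R\cap P_{\ge 3}|\ge 1$. I would then split into two cases. If every cycle satisfies $s_C\ge 4$, then the sum is at least $4|\C_R|\ge 3|\C_R|+1$, where the final step uses $|\C_R|\ge 1$ (the run opens at least one cycle); this already gives the claim, with the term $|R\cap P_{\ge 3}|$ only helping. Otherwise some cycle has $s_C=3$, forcing $|R\cap P_{\ge 3}|\ge 1$; combining the general bound $\sum_{C}s_C\ge 3|\C_R|$ with this single extra unit yields
\[|\ch(T_R)\cap V(\C_R)| + |L_u\cap V(\C_R)| + |R\cap P_{\ge 3}| \ge 3|\C_R|+1,\]
as required.

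I do not expect a genuine obstacle here, since the statement is essentially a routine aggregation of Lemma~\ref{lem:one-cycle}. The one point that truly needs care is that $|R\cap P_{\ge 3}|$ is a single quantity attached to the run $R$, not to individual cycles, so it may legitimately be added only once; this is precisely why the improvement is $+1$ rather than $+|\C_R|$, and it is also what makes the two-case split the natural bookkeeping device. I would additionally double-check the disjointness claim, namely that the counts $|\ch(T_R)\cap V(C)|$ and $|L_u\cap V(C)|$ sum over $C\in\C_R$ with no overlap, which is exactly the vertex-disjointness inherited from $\C$.
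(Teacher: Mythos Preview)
Your proposal is correct and is exactly the intended argument: the paper states Corollary~\ref{cor:run} without proof, merely indicating that it follows by applying Lemma~\ref{lem:one-cycle} to all cycles of the run, and your two-case split is precisely the right way to carry this out. The only substantive point, which you handle correctly, is that $|R\cap P_{\ge 3}|$ is a global quantity for the run and hence contributes at most once.
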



\begin{lemma}
\label{lem:run}
 For any run $R$,
 \begin{equation}
 \label{eq:run} 
 |L_u \cap V(\C_R)| + \sum_{d \ge 2}(2d-3)|R\cap P_d| -|R\cap X_3| \ge 3|\C_R|.
 \end{equation}
\end{lemma}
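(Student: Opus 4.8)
The plan is to combine the structural count of $\ch(T_R)$ coming from Lemma~\ref{lem:subtree} with Corollary~\ref{cor:run}, and then reduce everything to an elementary inequality in the quantities $|R\cap P_d|$.

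First I would record two facts about the run. By Lemma~\ref{lem:subtree}, $T_R$ is a tree rooted at $x_b$ in which every vertex has at least two children (viewed as a vertex of $T$); summing the numbers of children over all vertices of $R$ and subtracting the $|R|-1$ parent-child edges inside $T_R$ gives
\[|\ch(T_R)| = \sum_{d\ge 2}(d-1)|R\cap P_d| + 1.\]
Second, a run contains at most one vertex of $X_3$, and when it does this vertex is exactly the root $x_b$: the $x$-vertex of an O3 operation is expanded immediately after the $P_1$-vertex that precedes it, so it always begins a new run. Hence $|R\cap X_3|\in\{0,1\}$, and $|R\cap X_3|=1$ precisely when $x_b$ was created by an O3 operation.

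Then I would split on $|\C_R|$. If $|\C_R|=0$ the right-hand side is $0$; since $2d-3\ge 1$ for $d\ge 2$ we have $\sum_{d\ge 2}(2d-3)|R\cap P_d|\ge |R|\ge |R\cap X_3|$, so the left-hand side is nonnegative and the claim holds. If $|\C_R|\ge 1$ I would invoke Corollary~\ref{cor:run}, which bounds $|\ch(T_R)\cap V(\C_R)| + |L_u\cap V(\C_R)| + |R\cap P_{\ge 3}|$ from below by $3|\C_R|+1$. Writing $2d-3 = 2(d-1)-1$, substituting the formula for $|\ch(T_R)|$, and using the crude estimate $|\ch(T_R)\cap V(\C_R)|\le |\ch(T_R)|$, the target inequality reduces to the purely arithmetic statement
\[\sum_{d\ge 3}(d-3)|R\cap P_d| \ge |R\cap X_3|.\]

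The main obstacle is precisely this last inequality, which is not true in general: a run whose root is an O3-vertex and all of whose vertices have exactly two or three children makes the left side $0$ while the right side is $1$. The resolution must therefore improve on the estimate $|\ch(T_R)\cap V(\C_R)|\le |\ch(T_R)|$ in exactly the case $|R\cap X_3|=1$. The point I would try to exploit is that when $x_b$ is an O3-vertex, the cycles opened by $R$ are forced to carry an extra unassigned leaf: the eventual closing of such a cycle adds a vertex that kills both of its neighbours on the cycle, and by the convention that an O2 operation assigns only the added vertex to itself (leaving the other dying leaves unassigned), this surplus lands in $L_u\cap V(\C_R)$. Making this precise — showing that the extra unit produced by the O3 structure of $x_b$ always compensates for the $|R\cap X_3|$ we must subtract — is the delicate step; I expect it to require revisiting the case analysis of Lemma~\ref{lem:one-cycle} for the cycle whose first vertex is a child of the O3-root, rather than appealing to Corollary~\ref{cor:run} as a black box.
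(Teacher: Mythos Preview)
Your reduction is correct up to the point where you isolate the arithmetic obstruction $\sum_{d\ge 3}(d-3)|R\cap P_d| \ge |R\cap X_3|$, and you are right that this fails in general. But the diagnosis of the fix is off: the missing unit does \emph{not} come from special structure of the O3-root, and you will not find it by reopening the case analysis of Lemma~\ref{lem:one-cycle}.

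The paper recovers the extra $+1$ by a different case split, on whether $\ch(T_R)\subseteq V(\C_R)$. If there is some child of $T_R$ that lies on no cycle of $\C_R$, then the crude estimate improves to $|\ch(T_R)\cap V(\C_R)|\le |\ch(T_R)|-1$; this is exactly one unit, and since $|R\cap X_3|\le 1$ it suffices, with Corollary~\ref{cor:run} used as a black box (your chain of inequalities then goes through with $-1$ replaced by $-|R\cap X_3|$). The complementary case $\ch(T_R)\subseteq V(\C_R)$ is handled by an entirely separate argument that does not use Corollary~\ref{cor:run} at all: one shows, using the DFS rule, that in fact $V(\C_R)=\ch(T_R)$ and that every vertex of $V(\C_R)$ is an unassigned leaf, so $|L_u\cap V(\C_R)|\ge 3|\C_R|$; the remaining terms $\sum_{d\ge 2}(2d-3)|R\cap P_d|-|R\cap X_3|\ge |R|-|R\cap X_3|\ge 0$ are nonnegative and the inequality follows. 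So the ``delicate step'' you anticipate is not needed; what is needed instead is the observation that either $\ch(T_R)$ has a vertex to spare, or the DFS rule forces all cycle vertices to be unassigned.
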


\begin{proof}
\case{1}: $\ch(T_R) \subseteq V(\C_R)$.
We claim that then $V(\C_R) \subseteq \ch(T_R)$ and so $V(\C_R) =\ch(T_R)$.
If there is a cycle $C\in\C_R$ with a vertex outside $\ch(T_R)$ then $C$ has a vertex in $\ch(T_R)$ with a neighbor in $V(C)\setminus\ch(T_R)$.
Among all such vertices for all cycles of $\C_R$ choose the one that was added to $T$ last, call it $v$. We see that after expanding all vertices of $R$ the algorithm expands $v$ (by the DFS rule and because $\ch(T_R) \subseteq V(\C_R)$), a contradiction because $v\in L(T)$. Hence indeed $V(\C_R) \subseteq \ch(T_R)$.

Now assume that there is $v\in V(\C_R)$ such that $v\not\in L_u$. Among all such vertices choose the one that was added last to $T$, call it $u$.
Note that $u$ is assigned to a vertex that was added to $T$ after the run $R$. We get a contradiction again, since after expanding all vertices of $R$ the algorithm expands $u$. To conclude, $V(\C_R)\subseteq L_u$, hence $|L_u\cap V(\C_R)| = \sum_{C\in\C_R}|V(C)| \ge 3|\C_R|.$ Since $|R\cap P_{\ge 2}|\ge |R\cap X_3|$, the claim follows.

\case{2}: $\ch(T_R) \not\subseteq V(\C_R)$.
Then, $|\ch(T_R)\cap V(\C_R)| \le |\ch(T_R)|-1.$
Since $|\ch(T_R)| = \sum_{d\ge 2}d|R\cap P_d| - (|R|-1)$, we get 
\begin{equation}
 \label{eq:T_d}
 \sum_{d\ge 2}|R\cap P_d|(d-1) \ge |\ch(T_R)\cap V(\C_R)|.
\end{equation}
Now observe that the LHS of~\eqref{eq:run} is always nonnegative, so the claim holds when $\C_R=\emptyset$.
Hence we can assume that $\C_R\ne\emptyset$ and the bound of Corollary~\ref{cor:run} applies.
Note also that $|R\cap X_3|\le 1$, which follows e.g.\ from Lemma~\ref{lem:subtree}.
Then,
\begin{equation*}
\begin{split}
|L_u \cap V(\C_R)| + \sum_{d\ge 2}|R\cap P_d|(2d-3) -|R\cap X_3| & \ge \\
|L_u \cap V(\C_R)| + \sum_{d\ge 2}|R\cap P_d|(d-1) + |R\cap P_{\ge 3}|- 1 & \ge^{\text{\eqref{eq:T_d}}} \\
|L_u \cap V(\C_R)| + |\ch(T_R)\cap V(\C_R)| + |R\cap P_{\ge 3}|- 1  & \ge^{\text{(Corollary~\ref{cor:run}})} \\
3|\C_R|.
\end{split}
\end{equation*}
\end{proof}

Now we are ready to prove the claim of Theorem~\ref{thm:many-leaves}, i.e.\ that $|L(T)|\ge (n+3|\C|)/4$.

Let  us add $\sum_{d\ge 2}(2d-3)|P_d|$ to both sides of~\eqref{eq:children2}:
\begin{equation}
|X_2| + |X_3\cap P_{\ge 2}| + 3\sum_{d\ge 2} (d-1)|P_d| = n-1+\sum_{d\ge 2}(2d-3)|P_d|.
\end{equation}
Since $|X_2| = |L(T)|-|L_u|$ and $\sum_{d\ge 2} (d-1)|P_d|=|L(T)|-1$ we get
\begin{equation}
\label{eq:dupa1}
4|L(T)| = n+2+|L_u|+\sum_{d\ge 2}(2d-3)|P_d|-|X_3\cap P_{\ge 2}|.
\end{equation}
By Lemma~\ref{lem:cycle-run} after summing~\eqref{eq:run} over all runs we get
\begin{equation}
\label{eq:dupa2}
|L_u| + \sum_{d\ge 2}(2d-3)|P_d| -|X_3\cap P_{\ge 2}| \ge 3|\C|.
\end{equation}
The claim follows immediately after plugging~\eqref{eq:dupa2} to~\eqref{eq:dupa1}.

\subsection{Proof of Theorem~\ref{thm:big-is}}
\label{app:thm:big-is}

\newtheorem*{thm:big-is:restated}{Theorem \ref{thm:big-is} (restated)}

\begin{thm:big-is:restated}
Every $n$-vertex outerplanar graph contains
\begin{itemize}
 \item an independent set $I$, and
 \item a collection of vertex-disjoint cycles $\C$
\end{itemize}
such that $9|I|\ge 4n-3|\C|$.
\end{thm:big-is:restated}

\begin{proof}
Let $H=(V,E)$ be the outerplanar graph under consideration.
We use the induction on $|V|$.
If $|V|=0$ we put $I=\C=\emptyset$ and the claim follows.
Hence we can assume that $H$ has at least one vertex.

\case{1}: There is a vertex $v$ such that $\deg_{H}v \le 1$.

Let $H'$ be the graph obtained from $H$ by removing $v$ and its neighbor, if any.
We apply induction to $H'$ and we get relevant $I_0$, $\C_0$ such that
\begin{equation}
 9|I_0|\ge 4n - 8 - 3|\C_0|.
\end{equation}
We set $I=I_0\cup \{v\}$ and $\C=\C_0$. Clearly, $I$ is an independent set and we have
\[ 9|I| = 9(|I_0|+1) \ge 4n+1-3|\C|.\]
This finishes Case 1.

Hence, in what follows we assume that every vertex of $H$ has degree at least 2. 
Now, consider a leaf block $Q$ of $H$, i.e.\ a block that contains at most one cutvertex of $H$.
Note that $Q$ is not a single edge, since then Case 1 applies.
Fix any outerplanar embedding of $Q$ and let $\bar{f}$ be the outer face of $Q$ in this embedding.
Let $D(Q)$ be the graph dual to $Q$ and let $T_Q$ be the graph obtained from $D(Q)$ by removing vertex $\bar{f}$.
Since $Q$ is outerplanar, $T_Q$ is a tree (in particular $T_Q$ is simple).
A face of $Q$ that is a leaf of $T_Q$ will be called a leaf-face.

\case{2}: Block $Q$ contains a leaf-face of odd length $f=v_1v_2\ldots v_{\ell}$ such that for $i=2,\ldots,\ell-1$ we have $\deg_H(v_i)=2$.

We apply induction to $H-V(f)$ and we get relevant $I_0$, $\C_0$ such that
\begin{equation}
 9|I_0|\ge 4n - 4\ell - 3|\C_0|.
\end{equation}
We set $I=I_0\cup \{v_2,v_4,\ldots,v_{\ell-1}\}$ and $\C=\C_0\cup\{f\}$. 
Note that since $Q$ is not a single edge, $\deg_Q(v)\ge 2$ for every vertex $v\in Q$, so if $\deg_H(v)=2$ then $v$ is not a cut vertex in $H$.
In particular none of the vertices $v_2,v_4,\ldots,v_{\ell-1}$ is a cutvertex.
It follows that $I$ is an independent set and we have
\[ 9|I| = 9\left(|I_0|+\frac{\ell-1}{2}\right) \ge 4n - 3|\C_0| + \frac{9}{2}(\ell-1)-4\ell=4n-3|\C|+\frac{\ell-3}{2}\ge 4n-3|\C|.\]

\case{3}: Block $Q$ contains a leaf-face of even length $f=v_1v_2\ldots v_{\ell}$ such that for $i=2,\ldots,\ell-1$ we have $\deg_H(v_i)=2$ and $\deg_H(v_{\ell})=\deg_Q(v_{\ell})=3$.

Let $w$ be the neighbor of $v_{\ell}$ distinct from $v_1,\ldots,v_{\ell-1}$.
We apply induction to $H-(V(f)\cup\{w\})$ and we get relevant $I_0$, $\C_0$ such that
\begin{equation}
 9|I_0|\ge 4n - 4\ell -4- 3|\C_0|.
\end{equation}
We set $I=I_0\cup \{v_2,v_4,\ldots,v_{\ell}\}$ and $\C=\C_0\cup\{f\}$. 
Clearly, $I$ is an independent set and we have
\[ 9|I| = 9\left(|I_0|+\frac{\ell}{2}\right) \ge 4n - 3|\C_0| + \frac{9}{2}\ell-4\ell-4=4n-3|\C|+\frac{\ell-2}{2}> 4n-3|\C|.\]

\case{4}: Block $Q$ contains two leaf-faces of even length with a common vertex, $f_1=v_1v_2\ldots v_{\ell}$ and $f_2=w_1w_2\ldots w_{k}$ with $v_{\ell}=w_k$, such that for $i=2,\ldots,\ell-1$ we have $\deg_H(v_i)=2$, for $i=2,\ldots,k-1$ we have $\deg_H(v_i)=2$ and $\deg_H(v_{\ell})=\deg_Q(v_{\ell})=4$.

We apply induction to $H-(V(f_1)\cup V(f_2))$ and we get relevant $I_0$, $\C_0$ such that
\begin{equation}
 9|I_0|\ge 4n - 4(k+\ell-1)- 3|\C_0|.
\end{equation}
We set $I=I_0\cup \{v_2,v_4,\ldots,v_{\ell}\}\cup \{w_2,w_4,\ldots,w_{k}\}$ and $\C=\C_0\cup\{f_1\}$. 
Clearly, $I$ is an independent set and we have
\[ 9|I| = 9\left(|I_0|+\frac{\ell}{2}+\frac{k}{2}-1\right) \ge 4n - 3|\C_0| + \frac{\ell+k}{2}-5=4n-3|\C|+\frac{\ell+k-4}{2}> 4n-3|\C|.\]

\case{5}: none of Cases 1-4 applies.

We show that $Q$ has at most two bounded faces.

First, assume that $T_Q$ has at least two non-leaves. Then, there is a non-leaf $f^*$ such that
\begin{itemize}
 \item $f^*$ has exactly 1 non-leaf neighbor $f^{**}$ in $T_Q$, and
 \item for every leaf neighbor $f$ of $f^*$, if $f$ is incident with a cutvertex $x$ in $H$, then $x$ is incident also with both $f^*$ and $f^{**}$.
\end{itemize}
Note that by Case 2 all leaf neighbors of $f^*$ are of even length.
Moreover, by Case 3, $f^*$ has $|f^*|-1\ge 2$ leaf neighbors.
Hence, Case 4 applies, a contradiction.

Now, assume $T_Q$ has exactly one non-leaf $f^*$.
Then, $f^*$ has at least 2 leaf neighbors in $T_Q$.
By Case 2, at least one of them, call it $f$, is of even length (note that $f^*$ may have a leaf neighbor $f'$ of odd length which contains a vertex $v$ such that $v$ is not incident with $f^*$ and $v$ is a cutvertex of $H$: then $\deg_H(v)>2$ and Case 2 would not apply to $f'$, but if there is another odd leaf neighbor of $f^*$ then Case 2 applies to it).
Since Case 3 does not apply to $f$, so (a) $f$ is incident with another two leaf neighbors $f'$ and $f''$ of $f^*$ or (b) $f$ is incident with exactly one leaf neighbor $f'$ and $f$ has a cutvertex $x$ of $H$ such that $x$ is not incident with $f'$.
In situation (a), by Case 2 and since $Q$ has at most one cutvertex of $H$, at least one of these leaf neighbors is of even length. But then Case 4 applies, a contradiction.
In situation (b), $f'$ is of even length by Case 2 and hence Case 4 applies to $f$ and $f'$, a contradiction.

It follows that $T_Q$ has only leaves, i.e.\ $T_Q$ is a single leaf or two adjacent leaves (in other words $Q$ has at most two bounded faces).

\case{5a}: $|V(T_Q)|=1$, i.e.\ $Q$ is a single facial cycle $C$.

Note that $|C|$ is even for otherwise Case 2 applies.
Then we apply induction to $H-V(C)$ and we get relevant $I_0$, $\C_0$ such that
\begin{equation}
 9|I_0|\ge 4n - 4|C|- 3|\C_0|.
\end{equation}
We set $\C=\C_0$ and we set $I$ to be the independent set obtained from $I_0$ by adding every second vertex of $C$: this can be done in two ways and we choose a way that does not add a cutvertex of $H$. 
\[ 9|I| = 9\left(|I_0|+\frac{|C|}{2}\right) \ge 4n -4|C|- 3|\C| + \frac{9}{2}|C|=4n-3|\C|+\frac{|C|}{2}> 4n-3|\C|.\]

\case{5b}: $|V(T_Q)|=2$, i.e.\ $Q$ has two bounded faces $f_1$, $f_2$ and they share an edge.

Since Case 2 does not apply, at least one face, say $f_1$ is of even length.
If both $f_1$ and $f_2$ are even then we apply induction to $H-V(Q)$ getting relevant $I_0$ and $\C_0$ and we create $I$ by extending $I_0$ by every second vertex of the outer face of $Q$ so that a cutvertex of $H$, if any, is not added. By the same calculation as in Case 5a we get the desired inequality.
Hence we can assume that $f_1$ is even and $f_2$ is odd, but we will show that this last case cannot happen. Then a cutvertex of $H$ belongs to $V(f_1)$, for otherwise Case 3 applies. But then Case 2 applies, a contradiction. This concludes the proof of Theorem~\ref{thm:big-is}.
\end{proof}

\section{A simple $5k$ kernel for {\sc (Planar) Max Leaf}}

In this section we show a simple kernelization algorithm for \maxleafname\, and \pmaxleafname. Below we describe three simple rules, which preserve planarity.

\begin{itemize}
 \item {\bf $(1,2)$-rule} If there is a 1-vertex $u$ adjacent with a 2-vertex $v$ then remove $v$.
 \item 
 {\bf Adjacent 2-vertices Rule} 
 Assume that there are two adjacent 2-vertices $u$ and $v$.
 If $uv$ is a bridge, contract $uv$, otherwise remove $uv$.
 \item {\bf Trivial Rule} If $G$ consists of a single edge, return YES if $k\le 2$.
\end{itemize}

It is quite clear that the above rules are correct for ({\sc Planar}) \maxleafname\; (see e.g.~\cite{prieto-phd}, Rules 1-3 for a proof).
Note that if none of our rules applies to a connected graph $G$, then every edge of $G$ has an endpoint of degree at least 3.

\begin{theorem}
\label{thm:n/5}
Let $G$ be a connected graph in which every edge has an endpoint of degree at least 3. Then $G$ has a spanning tree with at least $n/5$ leaves.  
\end{theorem}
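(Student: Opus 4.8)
The plan is to reuse the GENERIC/BRANCHING framework developed for Theorem~\ref{thm:n/4}, adapting it to the weaker hypothesis that every edge has an endpoint of degree at least $3$ (rather than the nonexistence of a $2$-vertex separator). First I would observe that the structural condition here is exactly what guarantees GENERIC terminates with a spanning tree: when neither O1 nor O2 nor O3 applies, the argument in Section~\ref{sec:12k} shows that each frontier vertex $v$ has a unique neighbor $x\notin V(T)$ with $\deg_G(x)=2$ and $N_G(x)\cap V(T)=\{v\}$, so the edge from $x$ to its other neighbor has both endpoints among the uncovered $2$-vertices — but under our hypothesis a $2$-vertex $x$ can only be joined to a vertex of degree at least $3$. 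I would need to recheck that this still forces a contradiction; the cleanest route is to note that the other neighbor $y$ of $x$ lies outside $V(T)$ (else $x$ would have been expandable), and then track degrees to show the process cannot stall. So the first step is to confirm that GENERIC runs to completion and returns a spanning tree $T$ under the new hypothesis.

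The second step is the leaf-counting bound, now weakened from $n/4$ to $n/5$ to absorb the presence of $2$-vertices that may form short paths. I would start from equation~\eqref{eq:children2},
\[
 |X_2| + |X_3\cap P_{\ge 2}| + \sum_{d\ge 2} d|P_d| = n-1,
\]
and use the three estimates $|X_2|\le |L(T)|$, $|X_3\cap P_{\ge 2}|\le |L(T)|-1$, and $\sum_{d\ge 2}d|P_d|\le 2(|L(T)|-1)$ exactly as before. The difficulty is that these alone give only $|L(T)|>n/4$ under the stronger hypothesis; with edges between $2$-vertices now permitted, the O2 operation can add a $2$-vertex $x$ whose sole remaining neighbor is again a $2$-vertex, creating a chain that is not immediately dead. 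I expect the slack must be recovered by a more careful accounting of how many $2$-vertices can be consumed per created leaf, which is precisely what degrades the constant to $5$.

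The main obstacle, and the part I would spend the most care on, is handling paths of consecutive $2$-vertices: unlike in Theorem~\ref{thm:n/4}, here we did not preprocess the graph to eliminate edges between $2$-vertices (indeed we cannot, since contracting or deleting such edges may destroy the ``every edge has a high-degree endpoint'' property or change $n$). So I would either introduce a modified operation that expands a maximal $2$-vertex path in one batch and charges its internal vertices against a single leaf, or else refine the $P_d$ bookkeeping to show that each such path contributes at most a bounded number of non-leaf vertices per leaf gained. Concretely, the accounting should show that internal degree-$2$ chains force at most five tree-vertices per guaranteed leaf, yielding $|L(T)|\ge n/5$. The remaining pieces — correctness of termination and the summation — are routine once the $2$-vertex-chain charging is set up correctly.
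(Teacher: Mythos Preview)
Your proposal has two intertwined errors. First, you misread the hypothesis: ``every edge has an endpoint of degree at least $3$'' already forbids edges between two $2$-vertices, so there are no ``paths of consecutive $2$-vertices'' to worry about, and the whole discussion of $2$-vertex chains and batch-expanding them is aimed at a phantom obstacle.

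Second, and more seriously, GENERIC does \emph{not} terminate under this hypothesis, so your first step fails. Take two copies of $K_4$ joined by a single $2$-vertex $x$ (so $x$ is adjacent to one vertex in each copy). Every edge has an endpoint of degree at least $3$. Start GENERIC in one copy; once that copy is absorbed, the unique non-dead leaf $v$ has $N_G(v)\setminus V(T)=\{x\}$, $N_G(x)\cap V(T)=\{v\}$, and $|N_G(x)\setminus V(T)|=1$, so none of O1, O2, O3 applies. The process stalls. The paper's remedy is to add a fourth operation O4 that simply expands any non-dead leaf when O1--O3 are unavailable. The hypothesis is then used exactly once, and in the right place: after O4 adds the $2$-vertex $x$, its other neighbor $z$ lies outside $T$ and has $\deg_G(z)\ge 3$ (since the edge $xz$ must have a high-degree endpoint), so O3 applies immediately to $x$. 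This pairs each O4 with a subsequent O3, giving $|X_4|\le |X_3\cap P_1|$. Plugging this extra $|L(T)|-1$ term into the analogue of~\eqref{eq:children2} (now with an $|X_4|$ summand on the left) yields $5|L(T)|>n$. So the constant drops from $4$ to $5$ not because of long $2$-vertex chains but because of isolated $2$-vertex ``bridges'' that cost one extra expansion each.
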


\begin{proof}
 We proceed similarly as in the proof of Theorem~\ref{thm:n/4}. As before, we start from a single vertex $r$ and we build a tree $T$ by applying operations O1-O3 as long as possible. Then we perform a new operation O4, which expands any leaf $v$ of $T$ which is not dead. Hence if $T$ is not spanning, one of operations O1-O4 is applicable. The algorithm builds the tree by applying O1-O4, but O4 is performed only if none of O1-O3 applies. Let $T$ be the final spanning tree of $G$.
 
 Now we show a lower bound on $|L(T)|$. Using the notation from Section~\ref{sec:many-leaves} we can state an analog of~\eqref{eq:children2}:
\begin{equation}
\label{eq:chyba-ostatnie}
|X_2| + |X_3\cap P_{1}| + |X_4| + \sum_{d\ge 2} d|P_d| = n-1.
\end{equation}
 Consider an operation O4, which expands a leaf $v$ of $T$.
 As we have shown in Section~\ref{sec:many-leaves}, $|N_G(v)\setminus V(T)|= \{x\}$ for some $x$ of degree 2 in $G$ and moreover the neighbor $z$ of $x$ distinct from $v$ is outside $T$. Since every edge has an endpoint of degree at least 3, $\deg(z)\ge 3$. It follows that just after expanding $v$ by O4, $x$ is expanded by operation O3. Hence $|X_4|\le |X_3 \cap P_1|$.
 Moreover, in Section~\ref{sec:many-leaves} we have shown that $|X_3 \cap P_1| = |X_3 \cap P_{\ge 2}| \le |L(T)|-1$ and $|X_2|\le |L(T)|$. 
After plugging these three bounds to~\eqref{eq:chyba-ostatnie} we get $|L(T)|> n/5$, as required.
\end{proof}

Theorem~\ref{thm:n/5} is another variation of the Kleitman-West result which might be of independent interest.
Note that the bound of Theorem~\ref{thm:n/5} is tight up to an additive constant, which is shown by the following family of graphs $G_k$. Begin with a cycle $v_0v_1\ldots v_{4k-1}v_0$. Then add $k$ vertices $w_0,\ldots,w_{k-1}$. Finally, for each $i=0,\ldots,k-1$ add the following edges: $w_iv_{4i+1}, w_iv_{4i+2}, w_iv_{4i+3}$. It is easy to see that every edge of $G_k$ has an endpoint of degree at least 3. It is also easy to see that if $T_k$ is a spanning tree of $G_k$ with maximum possible number of leaves, then $\lim |L(T_k)| / |V(G_k)| = 5$.

Now we can describe our kernelization algorithm.
Let $G'$ be the graph obtained from $G$ by applying our three rules as long as one of them applies.
By Theorem~\ref{thm:n/5}, if $k\le n/5$ we can return the answer YES. Hence $n < 5k$ and $G'$ is a $5k$-kernel for \pmaxleafname\ and \maxleafname.

\myparagraph{Acknowledgments} We are very grateful for the reviewers for numerous comments.
We also thank Michal Debski for helpful discussions.

\bibliographystyle{abbrv}

\end{document}